\def\todo#1{\ifthenelse{\boolean{@todoon}}{\marginpar{\textit{#1}}}{}}
\newtheorem{theorem}{Theorem}[section]
\newtheorem{lemma}[theorem]{Lemma}
\newtheorem{proposition}[theorem]{Proposition}
\newtheorem{definition}[theorem]{Definition}
\newcommand{\setj}[2]{S_{#1} (#2)}
\newcommand{\lamj}[2]{\lambda_{#1} (#2)}
\def\round#1{\left[#1 \right]_{\epsilon }}
\def\form#1#2{#1^{T} #2}
\def\myPhiSym{\ifthenelse{\boolean{@laptop}}%
{\varphi}
{\mathbb{\Phi}}
}
\def\balance#1{\textbf{bal}\left(#1\right)}
\def\union{\cup}
\def\intersect{\cap}
\def\defeq{\stackrel{\mathrm{def}}{=}}
\def\prob#1#2{\Pr_{#1}\left[ #2 \right]}
\def\expec#1#2{\mbox{\bf E}_{#1}\left[ #2 \right]}
\def\norm#1{\left\| #1 \right\|}
\def\infnorm#1{\left\| #1 \right\|_{\infty }}
\def\setof#1{\left\{#1  \right\}}
\def\sizeof#1{\left|#1  \right|}
\def\bvec#1{{\mbox{\boldmath $#1$}}}
\def\intersect{\cap}
\newcommand{\ceiling}[1]{\left\lceil#1\right\rceil}
\def\setof#1{\left\{#1  \right\}}
\newdimen\pIR
\newcommand\StevesR{{\rm I\kern\pIR R}}
\def\conduc#1#2{\Phi_{#1}\left(#2  \right)}
\def\conducin#1#2{\Phi^{G}_{#1}\left(#2  \right)}
\def\Conduc#1{\Phi_{#1}}
\def\Conducin#1{\Phi^{G}_{#1}}
\def\vol#1{\mu \left(#1  \right)}
\begin{document}

\title{A Local Clustering Algorithm for Massive Graphs and its Application to Nearly-Linear Time Graph
   Partitioning\thanks{%
This paper is the first in a sequence of three papers expanding
  on material that appeared first under the title
  ``Nearly-linear time algorithms for graph partitioning, 
    graph sparsification, and solving linear systems''~\cite{SpielmanTengPrecon}.
The second paper, ``Spectral Sparsification of Graphs''~\cite{SpielmanTengSparsifier}
  contains further results on partitioning graphs, and applies them to producing
  spectral sparsifiers of graphs.
The third paper, ``Nearly-Linear Time Algorithms for Preconditioning and Solving Symmetric, Diagonally Dominant Linear Systems''~\cite{SpielmanTengLinsolve} contains the results
  on solving linear equations and approximating eigenvalues and eigenvectors.
\vskip 0.01in
This material is based upon work supported by the National Science Foundation 
  under Grant Nos. 0325630, 0634957, 0635102 and 0707522.
Any opinions, findings, and conclusions or recommendations expressed in this material are those of the authors and do not necessarily reflect the views of the National Science Foundation.
}
}
\author{
Daniel A. Spielman\\
Department of Computer Science\\
Program in Applied Mathematics\\
Yale University
\and
Shang-Hua Teng\\
Department of Computer Science\\
Boston University}

\maketitle

\begin{abstract}
We study the design of {\em local algorithms} for 
  massive graphs.
A local algorithm is one that
  finds a solution containing or near a given vertex without
  looking at the whole graph.
We present a local clustering algorithm.
Our algorithm finds a good cluster---a subset of vertices
  whose internal connections are significantly richer
  than its external connections---near a given vertex.
The running time of our algorithm, when it finds a non-empty
  local cluster, is nearly linear in the size
  of the cluster it outputs.

Our clustering algorithm could be  a useful primitive
  for handling massive graphs, such as social networks and web-graphs.
As an application of this clustering algorithm,
  we present a partitioning algorithm
  that finds an approximate sparsest cut with nearly optimal balance.
Our algorithm takes time nearly linear in the number edges of the graph.

Using the partitioning algorithm of this paper, we have designed a nearly-linear
  time algorithm for constructing spectral sparsifiers of graphs, which
  we in turn use in a nearly-linear time algorithm for solving linear
  systems in symmetric, diagonally-dominant matrices.
The linear system solver also leads to a nearly linear-time
  algorithm for approximating 
  the second-smallest eigenvalue and corresponding eigenvector
  of the Laplacian matrix of a graph.
These other results are presented in two companion papers.
\end{abstract}

\newpage
\section{Introduction}\label{sec:Intro}

Given a vertex of interest in a massive graph, we
  would like to find a small cluster around that vertex, 
  \textit{in time proportional to the size of the cluster}.
The algorithm we introduce will solve this problem while
  only examining vertices near the initial
  vertex, under some reasonable notion of nearness.
We call such an algorithm a \textit{local} algorithm.

Our local clustering algorithm provides a very powerful
  primitive for the design of fast graph algorithms.
In Section~\ref{sec:cut} of this paper, we use it to design
  the first nearly-linear time algorithm for graph partitioning
  that produces a partition of nearly-optimal balance among those
  approximating a target conductance.
In the papers~\cite{SpielmanTengSparsifier} and~\cite{SpielmanTengLinsolve},
  we proceed to use this graph partitioning algorithm to design
  nearly-linear time algorithms for sparsifying graphs and for solving
  symmetric, diagonally-dominant linear systems.

\subsection{Local Clustering}
We say that a graph algorithm is a \textit{local algorithm} if it is given
  a particular vertex as input,
  and at each step after the first only examines vertices connected to those
  it has seen before.
The use of a local algorithm naturally leads to the question of in which
  order one should explore the vertices of a graph.
While it may be natural to explore vertices in order of shortest-path
  distance from the input
  vertex, such an ordering is a poor choice in graphs of low-diameter,
  such as social network graphs~\cite{diameterSocialNetwork}.
We suggest first processing the vertices that are most likely
  to occur in short random walks from at the input vertex.
That is, we consider a vertex to be near the input vertex if it is likely
  to appear in a short random walk from the input vertex.

In Section~\ref{sec:cut}, we use a local graph exploration process
  to find a cluster that is near the input vertex.
Following Kannan, Vempala and Vetta~\cite{KannanVempalaVetta}, we say that
  a set of vertices is a good cluster if
  it has low \textit{conductance}; that is,
  if it has many more external than internal edges.
We give an efficient local clustering algorithm, \texttt{Nibble}, that
  runs in time proportional to the size of the cluster it outputs.
Although our algorithm may not find a local cluster
  for some input vertices,
  we will show that it is usually successful.
In particular, we prove the following theorem:
There exists a constant $\alpha > 0$ such that for any
   target conductance $\phi$ and any cluster $C_0$ 
   of conductance at most $\alpha\cdot\phi^2/\log^3 n$,
   when given a random vertex $v$  sampled according to degree
   inside $C_0$,
   \texttt{Nibble} will return a cluster $C$ mostly inside $C_0$  and with
  conductance at most $\phi$,
  with probability at least $1/2$.

The local clustering algorithm \texttt{Nibble} makes a novel use of random walks.
For a positive integer $t$, suppose $p_{t,v}$ is
  the probability distribution of the $t$-step
  random walk starting at $v$.
As the support of $p_{t,v}$---the set of nodes with positive probability---could
  grow rapidly, \texttt{Nibble} maintains a truncated version
  of the distribution.
At each step of the truncated random walks,
  \texttt{Nibble} looks a for cluster among only nodes
  with high probability.
The truncation is critical to ensure that
    the clustering algorithm is output sensitive.
It guarantees that the size of the support of the distribution that
  \texttt{Nibble}
  maintains is not too much larger than the size of the cluster it produces.
The cluster that \texttt{Nibble}  produces is local to the starting
  vertex $v$ in the sense that it consists of nodes that are among
  the most favored destinations of random walks starting from $v$.

By using the personal PageRank vector~\cite{PageRank}
  to define nearness, Andersen, Chung and Lang~\cite{AndersenChungLang}, 
  have produced an improved version of our algorithm \texttt{Nibble},
  which they call \texttt{PageRank-Nibble}.
Following this work, other local algorithms have been designed by
  Andersen \textit{et. al.}~\cite{AndersenPageRank} for
  approximately computing Personal PageRank vectors, by
  Andersen~\cite{AndersenDense} for finding dense subgraphs
  and by Andersen, Chung and Lang~\cite{AndersenChungLang2} for partitioning
  directed graphs.

\subsection{Nearly Linear-Time Algorithms}

Our local clustering algorithm provides a powerful tool for
  designing fast graph algorithms.
In this paper and its two companion papers, we 
  show how to use it to design randomized, nearly linear-time
  algorithms for several important graph-theoretic and
  numerical problems.

The need for algorithms whose running time is
  linear or nearly linear in their input size
  has increased as algorithms handle larger inputs.
For example, in circuit design and simulation,
  an Intel Dual Core Itanium processor has more than
  one billion transistors, 
  which is  more than 100 times the number of
  transistors that the Pentium had in 2000~\cite{IntelMooresLaw};
  in scientific computing, one often needs to solve linear
  systems that involve hundreds of millions of variables~\cite{LargeScaleScientificComputing};
 in modern information infrastructure, the web has grown into
  a graph of hundreds billions of nodes~\cite{IndexableWeb2005}.
As a result of this rapid growth in problem size, what used to
  be considered an efficient algorithm, such as
  a $O(n^{1.5})$-time algorithm, may no longer be adequate for
  solving problems of these scales.
Space complexity poses an even greater problem.

Many basic graph-theoretic problems such as connectivity
  and topological sorting can be solved in linear or nearly-linear time.
The efficient algorithms for these problems are built on
  linear-time primitives such as Breadth-First-Search
  (BFS) and Depth-First-Search (DFS).
Minimum Spanning Trees (MST) and Shortest-Path Trees are
  examples of other commonly used nearly linear-time primitives.
We hope to build up the library of nearly-linear time
  graph algorithms that may be used as primitives.
While the analyzable variants of the 
  algorithms we present here, and even their improved versions
  by Andersen, Chung and Lang~\cite{AndersenChungLang}, may not be
  immediately useful in practice, we 
  believe practical algorithms may be derived from them by
  making less conservative choices of parameters.


Our local clustering algorithm provides an exciting new
  primitive for developing nearly linear-time graph algorithms.
Because its running time is proportional to the size of
  the cluster it produces, we can repeatedly apply it
  remove many clusters from a graph, all within nearly-linear time.

In the second part of this paper,
  we use  \texttt{Nibble}  as a subroutine to
  construct a randomized graph partitioning
  algorithm that runs in nearly-linear time.
To the best of our knowledge, this is the first nearly linear-time
  partitioning algorithm that finds an approximate sparsest cut
  with approximately optimal balance.
In our first companion paper~\cite{SpielmanTengSparsifier}, we apply this new
  partitioning algorithm to develop a
  nearly-linear-time algorithm for producing spectral sparsifiers of graphs.
We begin that paper by extending the partitioning algorithm of this paper
  to obtain a stronger guarantee on its output: if it outputs a small set,
  then the complement must be contained in a subgraph whose conductance is
  higher than the target.

\section{Clusters and Conductance}\label{sec:Def}

Let  $G = (V,E)$ be an undirected graph  with $V = \setof{1,\dotsc ,n}$.
A {\em cluster} of $G$ is a subset of $V$ that is
  richly intra-connected but sparsely connected
  with the rest of the graph.
The quality of a cluster can be measured by its conductance,
  the ratio of the number of its external connections to
  the number of its total connections.

We let $d (i)$ denote the degree of vertex $i$.
For $S \subseteq V$, we define $\vol{S} = \sum_{i \in S} d (i)$
  (often called the volume of $S$).
So, $\vol{V} = 2|E|$.
Let $E(S,V-S)$ be the set of edges connecting a vertex in
  $S$ with a vertex in $V-S$.
We define the {\em conductance} of a set of vertices $S$, written
  $\conduc{}{S}$ by
\[
  \conduc{}{S} \defeq
  \frac{\sizeof{E (S, V - S)}}
       {\min \left(\vol{S}, \vol{V - S} \right)}.
\]
The {\em conductance} of $G$ is then given by
\[
  \Conduc{G}{} \defeq \min_{S \subset V} \conduc{}{S}.
\]

We sometime refer to a subset $S$ of $V$ as a {\em cut} of $G$
  and refer to $(S,V-S)$ as a {\em partition} of $G$.
The {\em balance} of a cut $S$ or a partition $(S,V-S)$ is then equal to
  $$\balance{S} = \min (\vol{S}, \vol{V-S})/\vol{V}.$$
We call $S$ a \textit{sparsest cut}
  of $G$ if $\conduc{}{S} = \Conduc{G}{}$ and $\vol{S}/\vol{V}\leq 1/2$.

In the construction of a partition of $G$, we will be concerned with
  vertex-induced subgraphs of $G$.
However, when measuring the conductance and volumes of vertices in
  these vertex-induced subgraphs, we will continue to measure the
  volume according to the degrees of vertices in the original graph.
For clarity, we define the conductance of a set $S$ in the subgraph induced
  by $A \subseteq V$ by
\[
  \conducin{A}{S} \defeq
  \frac{\sizeof{E (S, A - S)}}
       {\min \left(\vol{S}, \vol{A - S} \right)},
\]
and
\[
  \Conducin{A}{}
   \defeq \min_{S \subset A} \conducin{A}{S}.
\]
For convenience, we define $\conducin{A}{\emptyset} = 1$ and, for $\sizeof{A} = 1$,
  $\Conducin{A}{} = 1$.

For $A \subseteq V$, we let $G (A)$ denote the subgraph of $G$ induced by
  the vertices in $A$.
We introduce the notation $G[A]$ to denote graph $G (A)$ to which self-loops
  have been added so that every vertex in $G[A]$ has the same degree
  as in $G$.
Each self-loop adds 1 to the degree.
We remark that if $G (A)$ is the subgraph of $G$ induced on the
  vertices in $A$, then
\[
  \Conducin{A}{} \leq     \Conduc{G (A)}{}.
\]
So, when we prove lower bounds on $ \Conducin{A}{}$, we obtain lower
  bounds on $ \Conduc{G (A)}{}$.

Clustering is an optimization problem:
Given an undirected graph $G$ and
   a conductance parameter, find a cluster $C$ such that
   $\conduc{}{C} \leq \phi$, or determine no such cluster exists.
The problem is  NP-complete (see, for example~\cite{LeightonRao}
  or~\cite{NPcompleteCluster}).
But, approximation algorithms exist.
Leighton and Rao~\cite{LeightonRao} used linear programming to obtain
 $O (\log n)$-approximations of the sparsest cut.
Arora, Rao and Vazirani~\cite{AroraRaoVazirani} improved this to $O (\sqrt{\log n})$
  through semi-definite programming.
Faster algorithms obtaining similar guarantees have been constructed by
  Arora, Hazan and Kale~\cite{AroraHazanKale},
  Khandekar, Rao and Vazirani~\cite{KhandekarRaoVazirani},
  Arora and Kale~\cite{AroraKale}, and
  Orecchia, Schulman, Vazirani, and Vishnoi~\cite{Orecchia}.

\subsection{The Algorithm \texttt{Nibble}}

The algorithm
  \texttt{Nibble}
  works by approximately computing the distribution
  of a few steps of the random walk starting at a
  seed vertex $v$.
It is implicit in the analysis of the volume
  estimation algorithm of Lov\'asz and
  Simonovits~\cite{LovaszSimonovits}
  that one can find a cut with small conductance from the distributions
  of the steps of the
  random walk starting at any vertex from which
  the walk does not mix rapidly.
We will observe that a random vertex in a set of low conductance
  is probably such a vertex.
We then extend the analysis of Lov\'asz and Simonovits
  to show one can find a cut with small conductance from approximations
  of these distributions, and that
  these approximations can be computed quickly.
In particular, we will truncate all
  small probabilities that appear in the distributions to 0.
In this way, we reduce the work required to
  compute our approximations.

For the rest of this section, we will work with a graph
  $G = (V,E)$ with $n$ vertices and $m$ edges, so that
  $\vol{V} = 2m$.
We will allow some of these edges to be self-loops.
Except for the self-loops, which we allow to occur with multiplicities,
  the graph is assumed to be unweighted.
We will let $A$ be the adjacency matrix of this graph.
That is,
\[
A (u,v) =
\begin{cases}
1 & \text{if $(u,v) \in E$ and $u \not = v$}
\\
k & \text{if $u = v$ and this vertex has $k$ self-loops}\\
0 & \text{otherwise}.
\end{cases}
\]

We define the following two vectors supported on a set of vertices $S$:
\begin{align*}
  \chi _{S} (u)
& =
  \begin{cases}
     1 & \text{for $u \in S$,}\\
     0 & \text{otherwise},
  \end{cases}\\
  \psi _{S} (u)
& =
  \begin{cases}
     d (u) / \vol{S} & \text{for $u \in S$,}\\
     0 & \text{otherwise}.
  \end{cases}
\end{align*}

We will consider the random walk
  that at each time step stays at the current vertex with probability $1/2$,
  and otherwise moves to the endpoint of a random edge attached
  to the current vertex.
Thus, self-loops increase the chance the walk stays at the current vertex.
For example, if a vertex has $4$ edges, one of which is a self-loop,
  then when the walk is at this vertex it has a $5/8$ chance of staying
  at that vertex, and a $1/8$ chance of moving to each of its $3$ neighbors.

The matrix realizing this walk can be expressed by
 $M = (A D^{-1} + I) / 2$,
  where
  $d (i)$ is the degree of node $i$,
  and $D$ is the diagonal matrix with diagonal entries
  $(d (1), \dotsc , d (n))$.
Typically, a random walk starts at a node $v$.
In this case, the distribution of the random walk at time $t$
  evolves according to $p_{t} = M^{t} \chi_{v}$.

We note that $\psi _{V}$ is the steady-state distribution of the
  random walk, and that $\psi _{S}$ is the restriction
  of that walk to the set $S$.

We will use the truncation operation defined by
\[
  \round{p} (u)
=
\begin{cases}
  p (u) & \text{if $p (u) \geq  d(u) \epsilon$,}\\
  0 & \text{otherwise}.
\end{cases}
\]

Our algorithm, \texttt{Nibble}, will generate the sequence of
  vectors starting at $\chi_{v}$ by the rules
\begin{align}
q_{t} = &
  \begin{cases}
\chi_{v} & \text{if $t = 0$, }
\\
 M r_{t-1}  & \text{otherwise,}
\end{cases}
\label{eqn:qt}
\\
r_{t} & = \round{q_{t}}.
\label{eqn:rt}
\end{align}

That is, at each time step, we will evolve the random walk one
  step from the current density, and then round every $q_{t} (u)$
  that is less than $d (u) \epsilon$ to 0.
Note that $q_{t}$ and $r_{t}$ are not necessarily probability vectors,
  as their components may sum to less than $1$.

In the statement of the algorithm and its analysis, we will use the
  following notation.
For a vector $p$, we let $\setj{j}{p}$ be the set of $j$ vertices $u$
  maximizing $p (u) / d (u)$, breaking ties lexicographically.
That is, $\setj{j}{p} = \setof{\pi (1), \dotsc , \pi (j)}$
  where $\pi$  is the permutation  such that
\[
p (\pi (i)) / d (\pi (i)) \geq p (\pi (i+1)) / d (\pi (i+1))
\]
  for all $i$,
  and $\pi (i) < \pi (i+1)$ when these two ratios are equal.
We then set
\[
  \lamj{j}{p} = \vol{S_{j} (p)} = \sum_{u \in \setj{j}{p}} d (u).
\]
Note that $\lamj{n}{p}$ always equals $2m$.

Following Lov\'asz and Simonovits \cite{LovaszSimonovitsFOCS}, we set
\begin{equation}\label{eqn:I}
  I (p, x) =
  \max_{\substack{w \in [0,1]^{n} \\
       \sum w (u) d (u) = x}}
  \sum_{u \in V} w (u) p (u) .
\end{equation}
This function $I (p,\cdot)$ is essentially the same as the function $h$
  defined by Lov\'asz and Simonovits---it only differs by a linear transformation.

We remark that for $x = \lamj{j}{p}$,
  $I (p,x) = p (\setj{j}{p})$, and that
  $I (p,x)$ is linear in $x$ between these points.
Finally, we let $I_{x} (p,x)$ denote the partial derivative of
  $I (p,x)$ with respect to $x$, with the convention that for
  $x = \lamj{j}{p}$,
\[
I_{x} (p,x) = \lim_{\delta\rightarrow 0}I_{x} (p,x - \delta ) =  p (\pi (j)) / d (\pi (j)),
\]
where $\pi$ is the permutation specified above
  so that $\pi (j) = \setj{j}{p} - \setj{j-1}{p}$.

As $p (\pi (i)) / d (\pi (i))$ is non-increasing,
  $I_{x} (p,x)$ is a non-increasing function in $x$ and
  $I (p,x)$ is a concave function in $x$.

During the course of our exposition, we will need to set many constants,
  which we collect here for convenience.
For each, we provide a suitable value and indicate where it is first used in the paper. 
\[
\begin{tabular}{l || l | l }
constant & value & where first used\\
\hline
$c_{1}$  & 200 & \eqref{eqn:t1}\\
$c_{2}$  & 280 & \eqref{eqn:f1phi}\\
$c_{3}$  & 1800 & \eqref{eqn:epsilon}\\
$c_{4}$  & 140 & \texttt{Nibble}, line C.4\\
$c_{5}$  & 20 & Definition~\ref{def:Sgb} \\
$c_{6}$  & 60 & Definition~\ref{def:Sgb}\\
\end{tabular}
\]
The following is an exhaustive list of the inequalities 
  we require these constants to satisfy.
\begin{align}
c_{2} & \geq 2 c_{4}  \label{eqn:c2geqc4}\\
c_{6} & \geq 2 c_{5} \label{eqn:c6geqc5}\\
c_{3} & \geq 8 c_{5} \label{eqn:c3geqc5}\\
c_{4} & \geq 4 c_{5} \label{eqn:c4geqc5}\\
\frac{1}{2 c_{6}} - \frac{1}{c_{3}} - \frac{1}{2 c_{5} c_{6}}& \geq \frac{1}{c_{4}}
\label{eqn:manyc1}
\\
\frac{1}{2 c_{5}} & \geq \frac{6}{5 c_{6}} + \frac{1}{c_{1}} \label{eqn:c5leqc1c6}\\
\frac{1}{5} & \geq 
\frac{1}{c_{5}} + \frac{4 c_{6}}{3 c_{3}} + \frac{1}{2 c_{1}} + \frac{1}{2 c_{2}}.
\label{eqn:manyc2}
\end{align}

Given a $\phi$, we set constants that will play a prominent role in our analysis:
\begin{align}
\ell  &\defeq  \ceiling{\log_{2}\left( \vol{V}/2 \right)} \label{eqn:l},\\
t_{1} & \defeq \ceiling{\frac{2}{\phi^{2}}
                 \ln \left(c_{1} (\ell +2) \sqrt{\vol{V}/2} \right)} ,
  \label{eqn:t1} \\
t_{h} & \defeq h t_{1}, \text{for $0 \leq h \leq \ell + 1$},
  \label{eqn:tj}\\
t_{last} & \defeq (\ell + 1) t_{1},  \text{and} \label{eqn:tmax}\\
f_{1} (\phi) & \defeq \frac{1}{c_{2} (\ell +2) t_{last}}. \label{eqn:f1phi}
\end{align}
Note that
\[
f_{1} (\phi) \geq \Omega \left(\frac{\phi^{2}}{\log^{3} \vol{V}} \right).
\]

\vskip 0.2in
\noindent
\fbox{
\begin{minipage}{6in}
\noindent $C = \mathtt{Nibble} (G, v, \phi,b)$\\
where $v$ is a vertex\\
$0 < \phi < 1$\\
$b$ is a positive integer.
\begin{enumerate}
\item  Set
\begin{equation}\label{eqn:epsilon}
\epsilon = 1/ (c_{3} (\ell+2) t_{last} 2^{b}).
\end{equation}

\item Set
  $q_{0} = \chi _{v}$ and
  $r_{0} = \round{q_{0}}$.

\item  For $t = 1$ to $t_{last}$
\begin{enumerate}
\item Set $q_{t} = M r_{t-1}$
\item  Set $r_{t} =  \round{ q_{t}}$.

\item  If there exists a $j $ such that
  \begin{enumerate}

\item [(C.1)] $\Phi (\setj{j}{q_{t}}) \leq \phi$,
\item [(C.2)]
    $\lamj{j}{q_{t}} \leq (5/6) \vol{V}$,
\item [(C.3)]
      $2^{b} \leq \lamj{j}{q_{t}}$, and
\item [(C.4)]
    $I_{x} (q_{t}, 2^{b}) \geq 1 / c_{4} (\ell +2) 2^{b}$.
  \end{enumerate}
  then return $C = \setj{j}{q_{t}}$
  and quit.
  \end{enumerate}
\item  Return $C = \emptyset$.
\end{enumerate}
\end{minipage}
}
\vskip 0.2in

Condition (C.1) guarantees that the set $C$ has low conductance.
Condition (C.2) ensures that it does not contain too much volume,
  while condition (C.3) ensures that it does not contain too little.
Condition (C.4) guarantees that many elements of $C$
  have large probability mass.
While it would be more natural to define condition (C.4) as a constraint
  on $I_{x} (q_{t}, \lamj{j}{q_{t}})$ instead of
  $I_{x} (q_{t}, 2^{b})$, our proof of correctness requires the latter.

In the rest of this section, we will
  prove the following theorem on the performance of \texttt{Nibble}.

\begin{theorem}[\texttt{Nibble}]\label{thm:Nibble}
\mbox{\rm \texttt{Nibble}}
  can be implemented so that
  on all inputs, it runs in time
  $  O (2^{b} (\log^{6} m ) / \phi^{4})$.
Moreover, \mbox{\rm \texttt{Nibble}} satisfies the following properties.
\begin{itemize}
\item [\mbox{\rm (N.1)}]
When
$C =\mbox{\rm \texttt{Nibble}} (G, v,\phi, b)$ is non-empty,
\[
\conduc{}{C} \leq \phi \quad  \mbox{\rm and}\quad  \vol{C} \leq (5/6)\vol{V}.
\]
\item [\mbox{\rm (N.2)}]
  Each set $S$ satisfying
\[
\vol{S} \leq (2/3) \vol{V} \quad \mbox{\rm and} \quad   \conduc{}{S} \leq f_{1} (\phi )
\]
has a subset $S^{g}$ such that
\begin{itemize}
\item [\mbox{\rm (N.2.a)}] $\vol{S^{g}} \geq \vol{S}/2$, and

\item [\mbox{\rm (N.2.b)}]  $v \in S^{g}$ and
  $C =\mbox{\rm \texttt{Nibble}} (G, v,\phi, b) \not = \emptyset$ imply
  $\vol{C \cap S} \geq 2^{b-1}$.
\end{itemize}
\item [\mbox{\rm (N.3)}] The set $S^{g}$ may be partitioned into
  subsets $S^{g}_{0}, \dotsc , S^{g}_{\ell}$
  such that if $v \in S^{g}_{b}$, then the
  set $C$ output by $\mathtt{Nibble} (G, v, \phi,b)$
  will not be empty.
\end{itemize}
\end{theorem}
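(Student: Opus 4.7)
The plan is to prove the four parts of the theorem roughly in the order (running time), (N.1), then (N.2) and (N.3) together, since the last two are really one argument.

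First, for the running time and (N.1). The key observation is that the truncation guarantees a sparse support: every nonzero entry $r_t(u)$ has $r_t(u) \geq d(u)\epsilon$, and $\sum_u r_t(u) \leq 1$, so the total volume of the support of $r_t$ is at most $1/\epsilon$. Each step $q_t = M r_{t-1}$ then touches at most $O(1/\epsilon)$ edges. Maintaining $r_t$ in a data structure sorted (or bucketed) by $r_t(u)/d(u)$ lets us scan through candidate sets $\setj{j}{q_t}$ in increasing $j$ and test (C.1)--(C.4) incrementally. Since $t_{last} = O(\log m /\phi^2)$ and $1/\epsilon = O(2^b \log^2 m/\phi^2)$, multiplying and allowing a $\log m$ factor for the data structure yields the claimed $O(2^b \log^6 m / \phi^4)$ bound. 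Part (N.1) is then immediate: (C.1) forces $\conduc{}{C} \leq \phi$ and (C.2) forces $\vol{C} \leq (5/6)\vol{V}$.

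For (N.2) and (N.3), I plan to follow the Lov\'asz--Simonovits approach using the concave curve $I(r_t,x)$. The crux is the mixing lemma: if every level set $\setj{j}{r_t}$ has conductance exceeding $\phi$, then one step of the walk smooths the curve, giving a bound of the form $I(r_{t+1},x) \leq \tfrac12(I(r_t,x-2\phi\bar x) + I(r_t,x+2\phi\bar x))$ up to a truncation error of roughly $\epsilon \vol{V}$ per step. Iterating this for $t_{last}$ steps pushes $I(r_{t},x)$ towards the line $x/\vol{V}$ at a rate determined by $\phi$. Now define, for each $b$, the ``good'' set $S_b^g \subseteq S$ as the vertices $v$ from which the truncated walk satisfies $r_{t_h}(S) \geq 1 - $ (small escape term) for all $h \leq \ell+1$, with the escape bounded using $\conduc{}{S} \leq f_1(\phi)$ and a union bound over the $t_{last}$ steps; since the hypothesis on $\conduc{}{S}$ makes the total escape mass at most $t_{last}\conduc{}{S} \leq 1/c_2(\ell+2)$, a Markov-style argument shows at most half the volume of $S$ can be bad, giving (N.2.a). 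The ``level'' $b$ of a vertex in $S^g$ is chosen so that $\lamj{j}{r_{t_b}}$ crosses the threshold $2^b$ at an appropriate time; partitioning $S^g$ by this level gives the decomposition in (N.3).

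The main argument then runs by contradiction. Fix $v \in S^g_b$, and suppose \texttt{Nibble} never succeeds, i.e., at every time $t \leq t_{last}$, no $j$ satisfies all four conditions (C.1)--(C.4). Since $v \in S^g_b$, I can show the curve $I(r_t,x)$ is bounded below at $x = 2^b$ for $t = 0, t_1, \ldots, t_{last}$: initially $I(r_0,2^b) = 1$, and by the escape bound this degrades by at most $1/c_2(\ell+2)$ per $t_1$ steps, so $I_x(r_{t_{last}}, 2^b) \geq 1/c_4(\ell+2)2^b$ using the concavity of $I$ and the constraints~\eqref{eqn:c2geqc4} and~\eqref{eqn:manyc1}. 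On the other hand, the failure of (C.1) at every time, combined with the L--S mixing lemma, forces $I(r_{t_{last}},2^b)$ to be very close to $2^b/\vol{V}$, which contradicts the lower bound --- this is where the inequalities relating $c_1,\ldots,c_6$ are used. Verifying that the $j$ at which the contradiction is resolved also satisfies (C.2) and (C.3) uses the choice of $t_{last}$ (so the walk has had time to spread beyond $2^b$) and that $\vol{S} \leq (2/3)\vol{V}$ keeps us away from the $(5/6)$ threshold.

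The main obstacle I expect is bookkeeping the truncation error through the iterated L--S inequality while simultaneously tracking the lower bound on $I(r_t,2^b)$ that comes from the escape-probability definition of $S^g_b$. Both the upper and lower bounds accumulate over $t_{last}$ steps, and matching the constants in (C.4) against the $c_i$ relations~\eqref{eqn:c2geqc4}--\eqref{eqn:manyc2} is what forces the precise choice of $\epsilon$, $t_1$, and $f_1(\phi)$. The claim $\vol{C \cap S} \geq 2^{b-1}$ in (N.2.b) will follow because (C.3) forces $\lamj{j}{q_t} \geq 2^b$, while the escape bound controlling $r_t$ outside $S$ implies all but a $1/2$-fraction of this volume lies inside $S$.
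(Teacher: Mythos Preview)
Your plan for the running time and (N.1) is essentially the paper's, and your high-level shape for (N.2)--(N.3) (escape bound to define $S^g$, Lov\'asz--Simonovits curve, upper/lower contradiction) is right. But the heart of (N.3) --- how $S^g_b$ is defined and why the contradiction goes through --- is genuinely missing, and what you sketch instead would not work.

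First, you define $S^g_b$ via the \emph{truncated} walk $r_t$; but $r_t$ depends on $\epsilon$, which depends on $b$, so the ``partition'' becomes circular. The paper defines both $S^g$ (via $\chi_{\bar S}^T M^t \chi_v \le t_{last}\Phi(S)$) and the data underlying $S^g_b$ using only the \emph{untruncated} walk $p_t = M^t\chi_v$, so the partition is intrinsic. Second, and more important, $S^g_b$ is not determined by when $\lambda_j$ crosses $2^b$; it comes from a doubling/pigeonhole device. One sets $x_h = x_h(v)$ by $I(p_{t_h},x_h) = (h+1)/c_5(\ell+2)$; since $x_0 \ge 1/c_5(\ell+2)$ and $x_\ell < \vol{V}$, some $h$ has $x_h \le 2x_{h-1}$. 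The first such $h$ is $h_v$, and $b$ is fixed by $x_{h_v-1}\in[2^b,2^{b+1})$. The lower bound in the contradiction is then read off at $x = x_{h_v}$ (\emph{not} $x=2^b$): the definition of $x_h$ gives $I(p_{t_{h_v}},x_{h_v}) = (h_v+1)/c_5(\ell+2)$ for free, and the truncation loss $\epsilon t_{last} x_{h_v}$ is controlled precisely because $x_{h_v}\le 2x_{h_v-1}<2^{b+2}$. Your claim ``initially $I(r_0,2^b)=1$ and this degrades by at most $1/c_2(\ell+2)$ per $t_1$ steps'' does not substitute for this: the escape bound constrains mass in $S$, not the curve at an arbitrary abscissa, and there is no mechanism keeping $I(q_t,2^b)$ high across $t_{last}$ steps without the $x_h$ construction. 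Two smaller corrections: (N.2.b) hinges on (C.4), not (C.3) --- it is (C.4) that gives the per-vertex lower bound $q_t(u)/d(u)\ge 1/c_4(\ell+2)2^b$ on the prefix, which together with the escape bound forces at least $2^{b-1}$ volume into $S$; and in the running time, $t_{last}=\Theta(\log^2 m/\phi^2)$ and $1/\epsilon=\Theta(2^b\log^3 m/\phi^2)$, though the product is still the stated bound.
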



\subsection{Basic Inequalities about Random Walks}

We first establish some basic inequalities
  that will be useful in our analysis.
Readers who are eager to see the analysis of \texttt{Nibble}
  can first skip this subsection.
Suppose $G = (V,E)$ is an undirected graph.
Recall $M = (A D^{-1} + I) / 2$,  where $A$ is the
  adjacency matrix of $G$.

\begin{proposition}[Monotonicity of Mult by $M$]\label{pro:infnorm}
For all non-negative vectors $p$,
\[
  \infnorm{D^{-1} (M p)}
 \leq
  \infnorm{D^{-1} p}.
\]
\end{proposition}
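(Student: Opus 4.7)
The plan is to reduce the claim to the observation that $D^{-1}A$ is row-stochastic, so that $(D^{-1}Mp)(u)$ is a weighted average of entries of $D^{-1}p$.

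First I would change variables by setting $q = D^{-1}p$, so $p = Dq$, and rewrite
\[
D^{-1} M p \;=\; D^{-1}\!\left(\tfrac{1}{2} A D^{-1} p + \tfrac{1}{2} p\right) \;=\; \tfrac{1}{2} D^{-1} A q \;+\; \tfrac{1}{2} q.
\]
The inequality to prove then becomes $\infnorm{\tfrac{1}{2} D^{-1} A q + \tfrac{1}{2} q} \leq \infnorm{q}$, with $q \geq 0$.

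Next I would verify that $D^{-1}A$ is row-stochastic. Using the paper's convention that each self-loop contributes $1$ to the degree, together with the entries of $A$ defined in the text (so that $A(u,u)$ equals the number of self-loops at $u$ and $A(u,v) \in \{0,1\}$ otherwise), one gets $\sum_{v} A(u,v) = d(u)$ for every vertex $u$. Hence $\sum_v (D^{-1}A)(u,v) = 1$ for all $u$, and all entries of $D^{-1}A$ are non-negative.

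Finally, for any vertex $u$,
\[
(D^{-1} A q)(u) \;=\; \sum_{v} \frac{A(u,v)}{d(u)}\, q(v) \;\leq\; \infnorm{q},
\]
since the right-hand side is a convex combination of non-negative numbers each at most $\infnorm{q}$. Combining this with the trivial bound $q(u) \leq \infnorm{q}$ gives
\[
(D^{-1} M p)(u) \;=\; \tfrac{1}{2}(D^{-1}Aq)(u) + \tfrac{1}{2} q(u) \;\leq\; \infnorm{q} \;=\; \infnorm{D^{-1} p},
\]
which holds for every $u$, proving the claim.

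There is no real obstacle here; the only subtle point is making sure the self-loop convention is used consistently so that row sums of $A$ really equal $d(u)$, which is exactly how the paper has set things up.
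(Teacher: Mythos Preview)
Your proof is correct and is essentially the same as the paper's: after the substitution $q=D^{-1}p$ (the paper writes $z$), the paper observes $D^{-1}MD = M^{T} = \tfrac{1}{2}(D^{-1}A+I)$ and notes that this matrix has row sums $1$, which is exactly the row-stochasticity of $D^{-1}A$ (averaged with $I$) that you verify entrywise. The only difference is presentational---the paper compresses your computation into the single identity $D^{-1}MD=M^{T}$.
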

\begin{proof}
Applying the transformation $z = D^{-1}p$, we see that it
  is equivalent to show that for all $z$
\[
  \norm{D^{-1} M D z}_{\infty } \leq \norm{z}_{\infty }.
\]
To prove this, we note that
  $D^{-1} M D = D^{-1} (AD^{-1}+I) D/2 = M^{T}$,
  and the sum of the entries in each row of this matrix is 1.
\end{proof}

\begin{definition}\label{def:DS}
For a set $S \subseteq V$, we define the matrix $D_{S}$ to be
  the diagonal matrix such that
  $D_{S} (u,u) = 1$ if $u \in S$ and $0$ otherwise.
\end{definition}

\begin{proposition}\label{pro:DS}
For every $S \subseteq V$, all
 non-negative vectors $p$ and $q$,
  and every $t \geq 1$,
\[
  p^{T} (D_{S} M)^{t} q \leq p^{T} M^{t} q.
\]
\end{proposition}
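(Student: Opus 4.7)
The plan is to prove the stronger entrywise statement that $(D_S M)^t q \leq M^t q$ coordinate-by-coordinate, and then take the inner product with the non-negative vector $p$ at the end. Since every quantity involved is entrywise non-negative, pointwise dominance of vectors will translate immediately into the desired scalar inequality.

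I would proceed by induction on $t$. Let $a_t = (D_S M)^t q$ and $b_t = M^t q$; both are non-negative since $M$, $D_S$, and $q$ are non-negative. The base case $t=0$ is trivial as $a_0 = b_0 = q$. For the inductive step, assume $a_{t-1} \leq b_{t-1}$ entrywise. Multiplying by the entrywise non-negative matrix $M$ preserves the inequality, so $M a_{t-1} \leq M b_{t-1} = b_t$. Next, because $D_S$ is a diagonal $0/1$ matrix it satisfies $D_S x \leq x$ for every non-negative vector $x$, and $M a_{t-1}$ is non-negative, so $a_t = D_S M a_{t-1} \leq M a_{t-1} \leq b_t$, completing the induction.

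With $(D_S M)^t q \leq M^t q$ established entrywise, I would conclude the proof by noting that for any non-negative vector $p$ and any two vectors $x \leq y$ coordinatewise, $p^T x \leq p^T y$. Applying this with $x = (D_S M)^t q$ and $y = M^t q$ yields the claimed inequality.

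There is no real obstacle here; the only subtlety is to keep straight that the inductive quantity must be strengthened from a scalar inner-product bound to an entrywise vector bound, since the latter is what is actually preserved under left-multiplication by $M$ and by $D_S$. Handling $p$ only at the very end (rather than trying to push it through the recursion) is what makes the argument clean.
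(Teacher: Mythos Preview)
Your proposal is correct and is essentially the same argument as the paper's. The paper writes $I = D_{S} + D_{\bar{S}}$ to get $p^{T} M q = p^{T} D_{S} M q + p^{T} D_{\bar{S}} M q \geq p^{T} D_{S} M q$ and then says ``by induction''; your entrywise inequality $D_{S} x \leq x$ for non-negative $x$ is exactly the same observation, and carrying the induction at the level of vectors (rather than the bilinear form) is just a more explicit way of saying the same thing.
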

\begin{proof}
For $t=1$, we observe
\[
  p^{T} (M) q
=
  p^{T} ((D_{S} + D_{\bar{S}}) M) q
=
  p^{T} (D_{S} M) q +
  p^{T} (D_{\bar{S}} M) q
\geq
  p^{T} (D_{S} M) q,
\]
as $p$, $q$,  $D_{\bar{S}}$, and $M$ are all non-negative.
The proposition now follows by induction.
\end{proof}

\begin{proposition}[Escaping Mass]\label{pro:isopLeaving}
For all $t \geq 0$ and for all $S\subset V$,
\[
\form{\bvec{1}}{(D_{S} M)^{t} \psi _{S}}
\geq
1 - t \Phi_{V} (S)/2.
\]
\end{proposition}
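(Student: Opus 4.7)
The plan is to induct on $t$, showing that at each step the mass of $p_t \defeq (D_S M)^t \psi_S$ decreases by at most $\Phi_V(S)/2$. Summing these per-step losses will give the claimed bound.

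First I would establish the elementary book-keeping. Since $M = (A D^{-1} + I)/2$ is column-stochastic, $\form{\bvec{1}}{M q} = \form{\bvec{1}}{q}$ for every vector $q$, so the only way mass can leave is through the $D_S$ projection. Writing $p_t = (D_S M)^t \psi_S$ and noting $D_S \psi_S = \psi_S$, this gives the identity
\[
\form{\bvec{1}}{p_t} \;=\; 1 - \sum_{s=1}^{t} \form{\bvec{1}}{(I - D_S) M p_{s-1}},
\]
because $\form{\bvec{1}}{\psi_S} = 1$. Thus it suffices to show that each per-step escape $\form{\bvec{1}}{(I - D_S) M p_{s-1}}$ is at most $\Phi_V(S)/2$.

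Next I would bound the \emph{density} $p_s(v)/d(v)$ uniformly. Since $\infnorm{D^{-1} \psi_S} = 1/\vol{S}$, Proposition~\ref{pro:infnorm} (applied inductively, with the obvious fact that multiplying a non-negative vector by $D_S$ can only decrease $\infnorm{D^{-1} \cdot}$) yields
\[
p_s(v)/d(v) \;\leq\; 1/\vol{S} \quad \text{for all } v,\; s.
\]
Since the walk never leaves $S$, only the entries with $v \in S$ matter.

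Now I would compute the escaping mass explicitly. For $u \notin S$ and $v \in S$ one has $M(u,v) = A(u,v)/(2 d(v))$, so
\[
\form{\bvec{1}}{(I - D_S) M p_{s-1}}
\;=\; \sum_{u \notin S}\sum_{v \in S} \frac{A(u,v)}{2 d(v)}\, p_{s-1}(v)
\;=\; \frac{1}{2}\sum_{v \in S} \frac{p_{s-1}(v)}{d(v)}\, \sizeof{E(\{v\}, V-S)}.
\]
Using the density bound and the fact that $\sum_{v \in S} \sizeof{E(\{v\}, V-S)} = \sizeof{E(S, V-S)}$, this is at most
\[
\frac{1}{2 \vol{S}} \sizeof{E(S, V-S)} \;\leq\; \frac{\sizeof{E(S, V-S)}}{2 \min(\vol{S}, \vol{V-S})} \;=\; \frac{\Phi_V(S)}{2}.
\]

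Finally, substituting this uniform per-step bound into the telescoping identity gives $\form{\bvec{1}}{p_t} \geq 1 - t\,\Phi_V(S)/2$, which is exactly the proposition. There is no real obstacle here; the only subtlety is making sure the density bound $p_s/d \leq 1/\vol{S}$ survives both the multiplication by $M$ and the projection $D_S$, which is precisely what Proposition~\ref{pro:infnorm} gives us.
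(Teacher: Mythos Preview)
Your proof is correct and follows essentially the same approach as the paper: establish the density bound $\infnorm{D^{-1}(D_S M)^t \psi_S}\le 1/\vol{S}$ by induction via Proposition~\ref{pro:infnorm}, then bound the per-step escape $\form{\bvec{1}}{(I-D_S)Mp_{s-1}}$ by $\sizeof{E(S,V-S)}/(2\vol{S})\le \Phi_V(S)/2$ and telescope. The only cosmetic difference is that the paper carries out the per-step bound via the matrix identity $\bvec{1}^T(I-D_S M)=\chi_{\bar S}^T M$ before expanding $M=(I+AD^{-1})/2$, whereas you compute the same quantity coordinate-wise.
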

\begin{proof}
Note that $M \psi _{S}$ is the distribution after
  a single-step walk from a random vertex in $S$
  and $\bvec{1}^{T}D_{S} (M\psi _{S})$
  is the probability that the walk stays inside $S$.
Thus, $\form{\bvec{1}}{(D_{S} M)^{t} \psi _{S}}$
  is the probability that a $t$-step walk starting
  from a random vertex in $S$ stays entirely in $S$.

We first prove by induction that for all $t \geq 0$,
\begin{equation}\label{eqn:escapMass1}
\infnorm{D^{-1} (D_{S} M)^{t} \psi _{S}} \leq 1/\vol{S}.
\end{equation}
The base case, $t= 0$, follows from the fact that
  $\infnorm{D^{-1} \psi _{S}} = 1/\vol{S}$.
To complete the induction, observe that if
  $x$ is a non-negative vector such that
  $\infnorm{D^{-1} x} \leq 1/\vol{S}$, then
\[
\infnorm{D^{-1} (D_{S} M) x}
=
\infnorm{D_{S} D^{-1}  M x}
\leq
\infnorm{D^{-1}  M x}
\leq
\infnorm{D^{-1}  x}
\leq
1/\vol{S},
\]
where the second-to-last inequality follows from
  Proposition~\ref{pro:infnorm}.

We will now prove that for all $t$,
\[
\form{\bvec{1}}{(D_{S} M)^{t} \psi _{S}}
-
\form{\bvec{1}}{(D_{S} M)^{t+1} \psi _{S}}
\leq  \Phi_{V} (S)/2,
\]
from which the proposition follows, as $\bvec{1}^{T} \psi_{S} = 1$.

Observing that $\bvec{1}^{T} M = \bvec{1}^{T}$,
  we compute
\begin{align*}
\lefteqn{\form{\bvec{1}}{(D_{S} M)^{t} \psi _{S}}
-
\form{\bvec{1}}{(D_{S} M)^{t+1} \psi _{S}}}\\
& =
\bvec{1}^{T} (I - D_{S} M) (D_{S} M)^{t} \psi _{S}\\
& =
\bvec{1}^{T} (M - D_{S} M) (D_{S} M)^{t} \psi _{S}\\
& =
\bvec{1}^{T} (I - D_{S} ) M (D_{S} M)^{t} \psi _{S}\\
& =
\chi_{\bar{S}}^{T} M (D_{S} M)^{t} \psi _{S}\\
& =
(1/2) \chi_{\bar{S}}^{T} (I + A D^{-1}) (D_{S} M)^{t} \psi _{S}\\
& =
(1/2) \chi_{\bar{S}}^{T} (A D^{-1}) (D_{S} M)^{t} \psi _{S}
\quad \quad \quad \quad \quad \quad \text{ (as $\chi_{\bar{S}}^{T} I D_{S} = \bvec{0}$)}\\
& \leq
(1/2) \sizeof{E (S,V-S)} \infnorm{D^{-1} (D_{S} M)^{t} \psi _{S}}\\
& \leq
\frac{1}{2}\frac{\sizeof{E (S,V-S)}}{\vol{S}}
\quad \quad \quad  \quad \quad \quad  \quad \quad \text{(by inequality~\eqref{eqn:escapMass1})}\\
& \leq
 \Phi_{V} (S)/2.
\end{align*}
\end{proof}

\subsection{The Analysis of \texttt{Nibble}}

Our analysis of \texttt{Nibble} consists of three main steps.
First, we define the sets $S^{g}$ mentioned in
  Theorem \ref{thm:Nibble} and establish property (N.2).
We then refine the structure of $S^{g}$ to
  define sets $S_{b}^{g}$ and prove property (N.3).
The sets $S^{g}$ and $S^{g}_{b}$ are defined in terms
  of the distributions of
  random walks from a vertex in $S$, without reference to the
  truncation we perform in the algorithm.
We then analyze the impact of truncation used in \texttt{Nibble}
  and extend the theory of Lov\'asz and Simonovits \cite{LovaszSimonovits}
  to truncated random walks.

\subsection*{Step 1: $S^{g}$ and its properties}


\begin{definition}[$S^{g}$]\label{def:sg}
For each set $S \subseteq V$, we define
  $S^{g}$ to be the set of nodes $v$ in $S$ such that
  for all $t \leq t_{last}$,
\[
  \form{\chi _{\bar{S}}}{M^{t} \chi _{v}}
\leq
  t_{last} \Phi (S).
\]
\end{definition}

Note that $\form{\chi _{\bar{S}}}{M^{t} \chi _{v}}$
  denotes the probability that a $t$-step random walk
  starting from $v$ terminates outside $S$.
Roughly speaking, $S^{g}$ is the set of vertices $v \in S$
  such that a random walk from $v$
  it is reasonably likely to still be in $S$ after $t_{last}$
  time steps.
We will prove the following bound on the volume of $S^{g}$.
\begin{lemma}[Volume of $S^{g}$]\label{lem:sizeSg}
\[
  \vol{S^{g}} \geq \vol{S}/2.
\]
\end{lemma}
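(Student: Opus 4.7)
The plan is to deduce the claim from Proposition~\ref{pro:isopLeaving} (Escaping Mass) by a simple Markov argument, after replacing the quantity appearing in the definition of $S^{g}$ with a monotone surrogate.

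For a vertex $v \in S$, let $e_{t}(v)$ denote the probability that a random walk started at $v$ visits $V - S$ at some step $s \in \{1, \ldots, t\}$. Then $e_{t}(v)$ is non-decreasing in $t$ and pointwise dominates $\chi_{\bar{S}}^{T} M^{t} \chi_{v}$, since the walk being outside $S$ at time $t$ implies that it has been outside $S$ at some time $\leq t$. Consequently, to prove Lemma~\ref{lem:sizeSg} it suffices to show that the set of $v \in S$ with $e_{t_{last}}(v) \leq t_{last}\,\Phi(S)$ has volume at least $\vol{S}/2$: any such vertex $v$ automatically satisfies $\chi_{\bar{S}}^{T} M^{t} \chi_{v} \leq e_{t}(v) \leq e_{t_{last}}(v) \leq t_{last}\,\Phi(S)$ for every $t \leq t_{last}$, and is therefore in $S^{g}$.

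Next I would bound the average of $e_{t_{last}}$ under the distribution $\psi_{S}$ using Proposition~\ref{pro:isopLeaving}. That proposition states that the probability a $t_{last}$-step walk drawn from $\psi_{S}$ remains in $S$ throughout is at least $1 - t_{last}\,\Phi(S)/2$; taking the complementary event gives $\sum_{v \in S} \psi_{S}(v)\, e_{t_{last}}(v) \leq t_{last}\,\Phi(S)/2$, and multiplying by $\vol{S}$ yields $\sum_{v \in S} d(v)\, e_{t_{last}}(v) \leq \vol{S} \cdot t_{last}\,\Phi(S)/2$.

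A Markov inequality then closes the proof: the bad set $B = \{v \in S : e_{t_{last}}(v) > t_{last}\,\Phi(S)\}$ satisfies $\vol{B} \cdot t_{last}\,\Phi(S) \leq \vol{S} \cdot t_{last}\,\Phi(S)/2$, so $\vol{B} \leq \vol{S}/2$, and by the preceding paragraph $S \setminus B \subseteq S^{g}$. The main conceptual point---and essentially the only step where one could lose the bound---is to pass from the non-monotone quantity $\chi_{\bar{S}}^{T} M^{t} \chi_{v}$ to the monotone surrogate $e_{t}(v)$ and apply Markov just once at $t = t_{last}$; a naive union bound over all $t \leq t_{last}$ would waste an unnecessary factor of $t_{last}$.
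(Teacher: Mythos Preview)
Your proof is correct and is essentially the same as the paper's. Your monotone surrogate $e_{t}(v)$ is exactly the quantity $1 - \form{\bvec{1}}{(D_{S}M)^{t}\chi_{v}}$ that the paper uses (the paper's proof of Proposition~\ref{pro:isopLeaving} even states this probabilistic interpretation), and both arguments then average against $\psi_{S}$, invoke Proposition~\ref{pro:isopLeaving} at $t=t_{last}$, and apply Markov's inequality once.
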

\begin{proof}
Let $S \subseteq V$, and let $D_{S}$ be the diagonal matrix such that
  $D_{S} (u,u) = 1$ if $u \in S$ and $0$ otherwise.
For $t \geq 0$,
\begin{align*}
\form{\chi _{\bar{S}}}{M^{t} \chi _{v}}
& =
\left(\bvec{1} - \chi_{S} \right)^{T} M^{t} \chi_{v}\\
& =
\bvec{1}^{T}  \chi_{v} - \chi_{S}^{T} M^{t}  \chi_{v}\\
& =
1 - \form{\chi _{S}}{M^{t} \chi _{v}}\\
& \leq
1 - \form{\bvec{1}}{(D_{S} M)^{t} \chi _{v}},
\quad \text{by Proposition~\ref{pro:DS},}
\\
& \leq
1 - \form{\bvec{1}}{(D_{S} M)^{t_{last}} \chi _{v}},
\end{align*}
as $\form{\bvec{1}}{(D_{S} M)^{t} \chi _{v}}$
  is a non-increasing function of $t$.
Define
\[
 S' = \setof{v : 1 - \form{\bvec{1}}{(D_{S} M)^{t_{last}} \chi _{v}}
  \leq
   t_{last} \Phi (S)}.
\]
So, $S' \subseteq S^{g}$, and it
  suffices to prove that
  $\vol{S'} \geq \vol{S}/ 2$.

Applying Proposition~\ref{pro:isopLeaving}, we obtain
\begin{align*}
t_{last} \Phi (S) / 2
& \geq
 1 - \form{\bvec{1}}{(D_{S} M)^{t_{last}} \psi _{S}}
\\
 & =
  \sum _{v \in S}
  \frac{d (v)}{\vol{S}}
\left( 1 - \form{\bvec{1}}{(D_{S} M)^{t_{last}} \chi _{v}} \right)
\\
 & >
  \sum _{v \in S - S'}
  \frac{d (v)}{\vol{S}}
   t_{last} \Phi (S), \quad \text{by the definition of $S'$}\\
 & = \frac{\vol{S - S'}}{\vol{S}}
   t_{last} \Phi (S).
\end{align*}
So, we may conclude
\[
\frac{\vol{S - S'}}{\vol{S}}
<
\frac{1}{2},
\]
from which the lemma follows.
\end{proof}

We now prove the following lemma,  which says  that
  if \texttt{Nibble} is started from any
  $v \in S^{g}$ with parameter $b$ and returns a non-empty set $C$,
  then  $\vol{C \cap S} \geq 2^{b-1}$.

\begin{lemma}[N2]\label{lem:N3}
Let $S \subseteq V$ be a set of vertices such that
  $\Phi (S) \leq f_{1} (\phi )$.
If \texttt{Nibble} is run with parameter $b$,
  is started at a $v \in S^{g}$,
  and outputs a non-empty set $C$, then
  $\vol{C \intersect S} \geq 2^{b-1}$.
\end{lemma}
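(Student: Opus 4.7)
The plan is to show that most of the mass of $C$ lies in $S$, then convert that mass bound into a volume bound via the density lower bound supplied by condition (C.4). Rather than working directly with $C = \setj{j}{q_t}$, I would work with the smaller prefix $C^\star = \setj{j^\star}{q_t}$, where $j^\star$ is the least index with $\lamj{j^\star}{q_t} \geq 2^b$. Condition (C.3) gives $j^\star \leq j$, so $C^\star \subseteq C$, and by the choice of $j^\star$ we have $\vol{C^\star} \geq 2^b$. The goal then reduces to establishing $\vol{C^\star \cap \bar S} \leq 2^{b-1}$, from which $\vol{C \cap S} \geq \vol{C^\star \cap S} \geq 2^{b-1}$ follows.

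The first ingredient bounds the mass of $q_t$ outside $S$. An easy induction using $r_{s-1} \leq q_{s-1}$ componentwise (truncation only zeroes entries) and the non-negativity of $M$ shows $q_t \leq M^t \chi_v$ componentwise for every $t$. Combined with $v \in S^g$ and $t \leq t_{last}$, Definition~\ref{def:sg} yields
\[
q_t(\bar S) \;\leq\; \form{\chi_{\bar S}}{M^t \chi_v} \;\leq\; t_{last}\,\Phi(S).
\]
Using the hypothesis $\Phi(S) \leq f_{1}(\phi) = 1/(c_{2}(\ell+2) t_{last})$, this simplifies to $q_t(\bar S) \leq 1/(c_{2}(\ell+2))$.

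The second ingredient is a density lower bound on $C^\star$. Since $\lamj{j^\star-1}{q_t} < 2^b \leq \lamj{j^\star}{q_t}$, the left-limit convention for $I_x$ forces $I_x(q_t, 2^b) = q_t(\pi(j^\star))/d(\pi(j^\star))$, so condition (C.4) translates into the pointwise bound $q_t(u)/d(u) \geq 1/(c_{4}(\ell+2) 2^b)$ for every $u \in C^\star$. Summing this over $u \in C^\star \cap \bar S$ gives
\[
\frac{\vol{C^\star \cap \bar S}}{c_{4}(\ell+2) 2^b} \;\leq\; q_t(C^\star \cap \bar S) \;\leq\; q_t(\bar S) \;\leq\; \frac{1}{c_{2}(\ell+2)},
\]
and rearranging yields $\vol{C^\star \cap \bar S} \leq (c_{4}/c_{2})\, 2^b \leq 2^{b-1}$ by inequality~\eqref{eqn:c2geqc4}. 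Combining with $\vol{C^\star} \geq 2^b$ delivers $\vol{C \cap S} \geq \vol{C^\star} - \vol{C^\star \cap \bar S} \geq 2^{b-1}$.

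The main subtlety is the mismatch between what condition (C.4) controls and what $C$ is: (C.4) pins down the density only at rank $j^\star$, whereas $C$ may extend past rank $j^\star$ to vertices of strictly smaller density, which would frustrate any attempt to apply the density bound to all of $C$ directly. Passing to the prefix $C^\star$ sidesteps this issue, and the factor of $2$ between $2^b$ and $2^{b-1}$ in the target conclusion is exactly what is needed to absorb the constant $c_{4}/c_{2} \leq 1/2$.
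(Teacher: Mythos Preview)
Your proposal is correct and follows essentially the same approach as the paper's proof: both pass to the prefix set $\setj{j^\star}{q_t}$ (the paper calls the index $j'$), bound the $q_t$-mass in $\bar S$ via $q_t \leq M^t\chi_v$ and the definition of $S^g$, convert mass to volume using the density lower bound from (C.4), and then use $c_2 \geq 2c_4$ to conclude. The two arguments are nearly identical in structure and detail.
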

\begin{proof}
For $v \in S^{g}$, let $q_{t}$ be given by \eqref{eqn:qt} and \eqref{eqn:rt}.
Then, for $t \leq t_{last}$,
\[
  \chi_{\bar{S}}^{T} q_{t}
\leq \chi_{\bar{S}}^{T} M^{t} \chi_{v}
\leq  \Phi (S) t_{last}
  \leq  f_{1} (\phi ) t_{last}
  \leq \frac{1}{c_{2} (\ell +2) },
\]
where the second inequality follows from the definition of $S^{g}$.

Let $t$ be the index of the step at which the set $C$ is generated.
Let $j'$ be the least integer such that
  $\lamj{j'}{q_{t}} \geq 2^{b}$.
Condition (C.3) implies $j' \leq j$.
As $I_{x}$ is non-increasing in its second argument and
  constant between $2^{b}$ and  $\lamj{j'}{q_{t}}$,
  Condition $(C.4)$ guarantees that for all
  $u \in \setj{j'}{q_{t}}$,
\[
q_{t} (u) / d (u) \geq 1/ c_{4} (\ell +2) 2^{b}.
\]
Thus,
\begin{multline*}
\vol{\setj{j'}{q_{t}} \intersect \bar{S}}
=
  \sum_{u \in \setj{j'}{q_{t}} \intersect \bar{S}} d (u)
\leq
  \sum_{u \in \setj{j'}{q_{t}} \intersect \bar{S}} c_{4} (\ell +2) 2^{b} q_{t} (u)
\\
\leq
  c_{4} (\ell +2) 2^{b} (\chi_{\bar{S}}^{T} q_{t})
\leq 
  \frac{c_{4} (\ell +2) 2^{b}}{c_{2} (\ell +2)}
\leq
  2^{b-1},
\end{multline*}
by \eqref{eqn:c2geqc4}.
So, $\vol{\setj{j'}{q_{t}} \intersect S} \geq 2^{b-1}$,
 and, as $j' \leq j$,
\[
\vol{\setj{j}{q_{t}}\intersect S} 
\geq
\vol{\setj{j'}{q_{t}} \intersect S} \geq 2^{b-1}.
\]
\end{proof}

\subsection*{Step 2: Refining $S^{g}$}

Before defining the sets $S^{g}_{b}$,
  we first recall some of the facts we can infer about
  the function $I$ from the work of Lov\'asz and Simonovits.
These facts will motivate our definitions and analysis.

In the first part of the proof of Lemma~1.4 of~\cite{LovaszSimonovitsFOCS},
 Lov\'asz and Simonovits prove
\begin{lemma}\label{lem:LSeasy}
For every non-negative vector $p$ and every $x$,
\begin{equation}\label{eqn:LSsimple}
  I (M p, x) \leq I (p, x).
\end{equation}
\end{lemma}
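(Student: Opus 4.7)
The plan is to obtain the inequality by exhibiting, for every feasible weight vector $w$ in the maximization defining $I(Mp, x)$, a companion vector $y = M^{T} w$ that is feasible for the maximization defining $I(p, x)$ and achieves exactly the same objective value. Combined with the definition of $I$ as a supremum, this immediately gives $I(Mp, x) \leq I(p, x)$.

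First I would rewrite $w^{T}(Mp) = (M^{T} w)^{T} p$ and set $y \defeq M^{T} w$, so that
$$\sum_{u} w(u)\,(Mp)(u) \;=\; \sum_{u} y(u)\, p(u).$$
It then remains only to verify that $y$ is admissible in the definition of $I(p,x)$, i.e.\ that $y \in [0,1]^{n}$ and $\sum_{u} y(u) d(u) = x$. For the box constraint, I would use that $M$ is column-stochastic with nonnegative entries: since $\bvec{1}^{T} A = d^{T}$,
$$\bvec{1}^{T} M \;=\; \tfrac{1}{2}(\bvec{1}^{T} A D^{-1} + \bvec{1}^{T}) \;=\; \tfrac{1}{2}(d^{T} D^{-1} + \bvec{1}^{T}) \;=\; \bvec{1}^{T},$$
so each entry of $y = M^{T} w$ is a convex combination of entries of $w$ and therefore lies in $[0,1]$. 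For the budget constraint, I would use that $d$ is a right eigenvector of $M$ with eigenvalue $1$: from $Md = (A\bvec{1} + d)/2 = (d+d)/2 = d$, it follows that
$$\sum_{u} y(u) d(u) \;=\; d^{T} M^{T} w \;=\; (Md)^{T} w \;=\; d^{T} w \;=\; x.$$

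Thus $y$ is feasible for the maximization defining $I(p,x)$, and so
$$\sum_{u} w(u)\,(Mp)(u) \;=\; \sum_{u} y(u)\, p(u) \;\leq\; I(p, x).$$
Taking the supremum over admissible $w$ yields $I(Mp, x) \leq I(p, x)$. There is no substantive obstacle here: the argument is a short algebraic manipulation whose only content is the two structural identities $\bvec{1}^{T} M = \bvec{1}^{T}$ and $Md = d$ for $M = (AD^{-1} + I)/2$, both of which follow from direct computation and the symmetry of $A$.
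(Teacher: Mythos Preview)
Your argument is correct and is essentially the standard Lov\'asz--Simonovits proof: push the operator $M$ onto the weight vector via $w^{T}(Mp) = (M^{T}w)^{T}p$, then check that $y = M^{T}w$ remains feasible for the maximization defining $I(p,x)$ using the two identities $\bvec{1}^{T}M = \bvec{1}^{T}$ and $Md = d$. The paper does not supply its own proof of this lemma; it simply attributes it to the first part of the proof of Lemma~1.4 in~\cite{LovaszSimonovitsFOCS}, which is exactly the argument you have written out.
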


For each $p_{t}$, $I (p_{t},x)$ is a concave function that
  starts at $(0,0)$ and goes to $(\vol{V},1)$.
Lemma~\ref{lem:LSeasy} says that for each $t$, the curve defined
  by $I (p_{t+1}, \cdot)$ lies below the curve defined
  by $I (p_{t}, \cdot)$.
In particular,
\begin{equation}\label{eqn:below}
\forall x, I (p_{t+1}, x) \leq I (p_{t},x).
\end{equation}

If none of the sets $\setj{j}{p_{t+1}}$ has conductance less than $\phi$,
  then Lov\'asz and Simonovits prove a bound on how far below
  $I (p_{t}, \cdot)$ the curve of
  $I (p_{t+1}, \cdot)$ must lie.
The following Lemma is a special case of Lemma~1.4 of~\cite{LovaszSimonovits},
  restricted to points $x$ of the form $\lamj{j}{M p}$.
Lov\'asz and Simonovits~\cite{LovaszSimonovitsFOCS} claim that the following
  is true for all $x$, but point out in the journal version of their 
  paper~\cite{LovaszSimonovits} that this claim was false.
Fortunately, we do not need the stronger claim.

\begin{lemma}\label{lem:LShard}
For any non-negative vector $p$,
  if $\Phi (\setj{j}{M p}) \geq \phi$, then
 for $x = \lamj{j}{M p}$,
\[
  I (M p, x)
\leq
\frac{1}{2}
\left(
I \big(p, x - 2 \phi \widehat{x}\big)
 +
I \big(p, x + 2 \phi \widehat{x} \big)
 \right),
\]
where $\widehat{x}$ denotes $\min (x, 2m-x)$.
\end{lemma}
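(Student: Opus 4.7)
My plan is to follow the unfolding argument of Lov\'asz and Simonovits: express $(Mp)(S)$ as a degree-weighted sum of entries of $p$, decompose it as the midpoint of two admissible weightings, and majorize each by $I(p,\cdot)$ at a shifted argument. First I would fix $S = \setj{j}{Mp}$; at the breakpoint $x = \lamj{j}{Mp}$, the maximality in \eqref{eqn:I} yields $I(Mp, x) = (Mp)(S)$. Expanding $M = \tfrac{1}{2}(I + AD^{-1})$ and regrouping the incoming neighbor contributions to each $(Mp)(u)$ by source vertex gives
\[
I(Mp, x) \;=\; p(S) \;-\; \tfrac{1}{2}\!\sum_{u \in S}\beta(u)p(u) \;+\; \tfrac{1}{2}\!\sum_{u \notin S}\gamma(u)p(u),
\]
where $\beta(u) = |N(u) \cap \bar S|/d(u)$ for $u \in S$, $\gamma(u) = |N(u) \cap S|/d(u)$ for $u \notin S$, and the flow-conservation identity $\sum_{u \in S}\beta(u) d(u) = \sum_{u \notin S}\gamma(u) d(u) = y$, with $y := |E(S,\bar S)|$, is the key bookkeeping fact.

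Next I would produce the symmetric split $I(Mp, x) = \tfrac{1}{2}(A + B)$ with $A := p(S) - \sum_{u \in S}\beta(u)p(u)$ and $B := p(S) + \sum_{u \notin S}\gamma(u)p(u)$. The term $A$ equals $\sum_v w_-(v) p(v)$ for the admissible weight vector $w_-(v) = (1-\beta(v))\chi_S(v) \in [0,1]^n$ with $\sum_v w_-(v) d(v) = x - y$; likewise $B = \sum_v w_+(v) p(v)$ for $w_+(v) = \chi_S(v) + \gamma(v)\chi_{\bar S}(v) \in [0,1]^n$ with $\sum_v w_+(v) d(v) = x + y$. Applying the variational definition \eqref{eqn:I} of $I(p, \cdot)$ to each piece yields $A \leq I(p, x - y)$ and $B \leq I(p, x + y)$, and therefore
\[
I(Mp, x) \;\leq\; \tfrac{1}{2}\bigl(I(p, x - y) + I(p, x + y)\bigr).
\]

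Finally, the conductance hypothesis $\Phi(S) \geq \phi$ gives $y \geq \phi\widehat{x}$, and since $y \mapsto \tfrac{1}{2}[I(p, x - y) + I(p, x + y)]$ is non-increasing (a consequence of the concavity of $I(p, \cdot)$ noted just before \eqref{eqn:I}), the inequality is preserved when $y$ is replaced by its lower bound. I expect the main obstacle to be matching the sharper factor $2\phi\widehat{x}$ in the statement, as opposed to the $\phi\widehat{x}$ produced directly by the split above: this is precisely the Lov\'asz--Simonovits refinement that requires the breakpoint hypothesis $x = \lamj{j}{Mp}$ via an edge-by-edge accounting that exploits the ratio-ordering of $S$ to re-scale the clipping in the $w_\pm$ construction, and it is exactly the feature whose invalid extension to all $x$ the authors flag above as the defect in the FOCS version of LS.
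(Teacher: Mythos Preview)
The paper does not give its own proof of this lemma; it cites it as a special case of Lemma~1.4 of Lov\'asz--Simonovits. Your argument up through
\[
I(Mp,x)\le\tfrac12\bigl(I(p,x-y)+I(p,x+y)\bigr),\qquad y=|E(S,\bar S)|=\Phi(S)\,\widehat{x},
\]
is exactly the standard Lov\'asz--Simonovits decomposition, and it is correct.

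The gap you flag at the end---upgrading the shift from $\phi\widehat{x}$ to $2\phi\widehat{x}$---is not something you have failed to carry out: it cannot be carried out, because the inequality with shift $2\phi\widehat{x}$ is false under the paper's definitions. Take $G$ to be the $4$-cycle and $p=(1,0,0,0)^{T}$. Then $Mp=(\tfrac12,\tfrac14,0,\tfrac14)$, $\setj{2}{Mp}=\{1,2\}$, $x=\lamj{2}{Mp}=4$, $\widehat{x}=4$, $\Phi(\setj{2}{Mp})=\tfrac12$, and $I(Mp,4)=\tfrac34$. With $\phi=\tfrac12$ the right-hand side of the stated lemma is $\tfrac12\bigl(I(p,0)+I(p,8)\bigr)=\tfrac12<\tfrac34$. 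Your version with shift $y=\phi\widehat{x}=2$ gives $\tfrac12\bigl(I(p,2)+I(p,6)\bigr)=1\ge\tfrac34$, as it should.

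The extra factor of~$2$ appears to be a transcription artifact from the normalisations in the original Lov\'asz--Simonovits paper. It is harmless downstream: replacing $2\phi\widehat{x}$ by $\phi\widehat{x}$ in Lemma~\ref{lem:C1help} weakens the per-step contraction from $(1-\phi^{2}/2)$ to $(1-\phi^{2}/8)$, which only inflates $t_{1}$ and the dependent constants by a bounded factor and leaves every conclusion intact. So your outline already proves the correct form of the lemma; the breakpoint hypothesis $x=\lamj{j}{Mp}$ is essential for the reason the paper states---to block the invalid extension to arbitrary~$x$---not to recover a missing factor of~$2$.
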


The mistake in~\cite{LovaszSimonovitsFOCS} is the assertion in the beginning
  of the proof that the inequality holds for all $x$ if it holds for all
  $x$ of form $\lamj{j}{M p}$.

When this lemma applies, one may draw a chord across the curve
  of $I (p_{t}, \cdot)$ around $x$ of width proportional to
  $\phi$, and know that $I (p_{t+1},x)$ lies below.
Thus, we know that if none of the sets $\setj{j}{p_{t}}$
  has conductance less than $\phi$, then the curve
  $I (p_{t}, \cdot)$ will approach a straight line.
On the other hand, Proposition~\ref{pro:isopLeaving} will tell
  us that some point of $I (p_{t_{last}}, \cdot)$ lies well above this line
  (see Lemma~\ref{lem:lowerBound}).

We will now define the sets
 $S^{g}_{b}$
  for $b = 1, \dotsc , \ell $,
  simultaneously with two quantities---$h_{v}$ and $x_{h}$,
  where $h_{v}$ is such that \texttt{Nibble} will stop
  between iterations $t_{h_{v}-1}$ and $t_{h_{v}}$ and $x_{h_{v}}$
  is the $x$-coordinate of a point that 
  may be shown in Lemmas~\ref{lem:lowerBound} and \ref{lem:C123} to
  contradict the conclusion
  of Lemma~\ref{lem:LShard} and thereby enable us to find a set
  of low conductance.

\begin{definition}[$x_{h}$, $h_{v}$ and $S^{g}_{b}$]\label{def:Sgb}
Given a $v \in S^{g}$, let $p_{t} = M^{t} \chi_{v}$.
For $0 \leq h \leq \ell+1$,
  define $x_{h} (v)$ to be the real number such that
\[
  I (p_{t_{h}}, x_{h} (v)) = \frac{h+1}{c_{5} (\ell +2)}.
\]
We write $x_{h}$ instead of $x_{h} (v)$ when $v$ is clear from context.
Define
\[
h_{v} =
\begin{cases}
\ell+1  &  \text{if $x_{\ell} (v) \geq 2m / c_{6}(\ell +2)$, and}
\\
\min \setof{h : x_{h} \leq 2 x_{h-1}}  & \text{otherwise}.
\end{cases}
\]
We define
\[
  S^{g}_{0} = \setof{v : x_{h_{v}-1} (v) < 2},
\]
and for $b = 1, \dotsc , \ell$, we define
\[
  S^{g}_{b} = \setof{v : x_{h_{v}-1} (v) \in [2^{b}, 2^{b+1})}
\]
\end{definition}

\begin{proposition}\label{pro:Sgb}
The quantities $h_{v}$ are well-defined and
  the sets $S^{g}_{b}$ partition $S^{g}$.
Moreover, $x_{h-1} < x_{h}$ for all $h$.
\end{proposition}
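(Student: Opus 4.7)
My plan is to verify the three claims in the order: monotonicity $x_{h-1} < x_h$; well-definedness of $h_v$; partition of $S^g$. Before any of these, I note $x_h$ itself is uniquely defined for each $h \in \{0, \ldots, \ell+1\}$: $I(p_{t_h}, \cdot)$ is continuous, concave, non-decreasing, starts at $0$, and reaches $1$ at some $x^\star \leq 2m$; on $[0, x^\star]$ it is strictly increasing, because its piecewise-linear slopes $p_{t_h}(\pi(j))/d(\pi(j))$ remain positive until the support of $p_{t_h}$ is exhausted. Since the target value $(h+1)/(c_5(\ell+2)) \leq 1/c_5 < 1$, it is attained at a unique point. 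For the strict inequality $x_{h-1} < x_h$, I would iterate Lemma~\ref{lem:LSeasy} from time $t_{h-1}$ to $t_h$ to obtain $I(p_{t_h}, x) \leq I(p_{t_{h-1}}, x)$ for every $x$, hence
\[
I(p_{t_{h-1}}, x_h) \;\geq\; I(p_{t_h}, x_h) \;=\; \frac{h+1}{c_5(\ell+2)} \;>\; \frac{h}{c_5(\ell+2)} \;=\; I(p_{t_{h-1}}, x_{h-1}),
\]
and monotonicity of $I(p_{t_{h-1}}, \cdot)$ then forces $x_h > x_{h-1}$.

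For well-definedness of $h_v$, only the second branch of the definition needs justification. Suppose $x_\ell < 2m/(c_6(\ell+2))$; I claim some $h \in \{1, \ldots, \ell\}$ satisfies $x_h \leq 2 x_{h-1}$. Arguing by contradiction, if $x_h > 2 x_{h-1}$ for every such $h$, telescoping yields $x_\ell > 2^\ell x_0$. Since $p_0 = \chi_v$, a direct calculation of the optimum in the definition of $I$ gives $I(\chi_v, x) = \min(1, x/d(v))$, so $x_0 = d(v)/(c_5(\ell+2)) \geq 1/(c_5(\ell+2))$. Combining with $2^\ell \geq \vol{V}/2 = m$ from \eqref{eqn:l} and then the constant inequality \eqref{eqn:c6geqc5} ($c_6 \geq 2 c_5$), this yields $x_\ell > m/(c_5(\ell+2)) \geq 2m/(c_6(\ell+2))$, contradicting the hypothesis of the second branch.

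For the partition claim, I need each $v \in S^g$ to satisfy $0 \leq x_{h_v - 1}(v) < 2^{\ell+1}$, so that the value lies in exactly one of the dyadic intervals $[0, 2), [2, 4), \ldots, [2^\ell, 2^{\ell+1})$ matching the definitions of $S^g_0, S^g_1, \ldots, S^g_\ell$. Non-negativity is immediate. For the upper bound, since $h_v \leq \ell+1$, one has $I(p_{t_{h_v-1}}, x_{h_v-1}) = h_v/(c_5(\ell+2)) \leq (\ell+1)/(c_5(\ell+2)) < 1$, so $x_{h_v-1}$ must lie strictly below the point where $I$ first attains $1$, giving $x_{h_v-1} < 2m \leq 2 \cdot 2^\ell = 2^{\ell+1}$ using $\ell \geq \log_2 m$.

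The only step requiring any care is the well-definedness of $h_v$ in the second branch, because it is the sole place where one must invoke the carefully tuned constant relation \eqref{eqn:c6geqc5} and compute the initial value $x_0$ explicitly before telescoping. Everything else is a routine consequence of the monotonicity and concavity of $I(p, \cdot)$ together with Lemma~\ref{lem:LSeasy}.
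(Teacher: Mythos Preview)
Your proof is correct and follows essentially the same approach as the paper. The only cosmetic differences are that you prove $x_{h-1} < x_h$ by comparing values of $I(p_{t_{h-1}}, \cdot)$ at the two points (the paper compares values of $I(p_{t_h}, \cdot)$ instead---a dual but equivalent argument), and you obtain the lower bound $x_0 \geq 1/(c_5(\ell+2))$ via the explicit formula $I(\chi_v, x) = \min(1, x/d(v))$ rather than the paper's observation that the slope of $I(p,\cdot)$ is at most $1$ for any probability vector $p$.
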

\begin{proof}
It follows from the definition of
  $I$ that 
  for a probability vector $p$ the slope of $I (p, \cdot)$
  is always less than 1, and so
  $x_{0} \geq 1 / c_{5} (\ell +2)$.
If $x_{\ell} < \vol{V} / c_{6} (\ell +2)$,
  then 
\[
x_{\ell}/ x_{0} <
\frac{\vol{V} c_{5} (\ell + 2) }{c_{6} (\ell + 2)}
\leq 
 \vol{V}/2, \quad \text{(by inequality \eqref{eqn:c6geqc5})}
\]
  so there is an integer $h \leq \ell$ such that
  $x_{h} \leq 2 x_{h-1}$, and so the quantities $h_{v}$ are well-defined.

To see that the sets $S^{g}_{b}$ partition $S^{g}$, it now suffices to
  observe that $x_{h_{v}-1} < \vol{V} \leq 2^{\ell + 1}$.

Finally, to show that $x_{h-1} < x_{h}$,
  we apply Lemma~\ref{lem:LSeasy} to show
\[
  I (p_{t_{h}}, x_{h-1}) \leq 
  I (p_{t_{h-1}}, x_{h-1})
 = \frac{h}{c_{5} (\ell + 2)}.
\]
As $I (p_{t_{h}}, \cdot)$ is non-decreasing
  and
\[
I (p_{t_{h}}, x_{h}) > \frac{h}{c_{5} (\ell + 2)},
\]
we can conclude that $x_{h} > x_{h-1}$.
\end{proof}

\subsection*{Step 3: Clustering and truncated random walks}

We now establish that vectors produced by the
  truncated random walk do not differ too much from those
  produced by the standard random walk.
\begin{lemma}[Low-impact Truncation]\label{lem:rounding}
For all $u \in V$ and $t$,
\begin{equation}\label{eqn:roundingp}
  p_{t} (u) \geq q_{t} (u) \geq   r_{t} (u) \geq p_{t} (u) -  t \epsilon d (u).
\end{equation}
For all $t$ and $x$,
\begin{equation}\label{eqn:roundingI}
I (p_{t},x) \geq I (q_{t}, x) \geq  I (r_{t}, x) \geq I (p_{t}, x) -  \epsilon x t.
\end{equation}
\end{lemma}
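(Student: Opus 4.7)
The plan is to prove inequality \eqref{eqn:roundingp} by induction on $t$ and then derive \eqref{eqn:roundingI} from it using the variational definition of $I$.

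For the base case $t=0$, we have $p_0 = q_0 = \chi_v$, and $r_0 = \round{\chi_v}$ equals $\chi_v$ provided $\epsilon \leq 1/d(v)$, which holds by the choice of $\epsilon$ in \eqref{eqn:epsilon}. So all three chains of inequalities hold trivially at $t=0$. For the inductive step, I would establish the three inequalities in turn. The upper bound $p_t \geq q_t$ follows because $p_{t-1} \geq r_{t-1}$ by induction, so the non-negativity of $M$ gives $p_t = M p_{t-1} \geq M r_{t-1} = q_t$ componentwise. The middle inequality $q_t \geq r_t$ is immediate from the definition of $\round{\cdot}$, which either keeps an entry or zeros it out.

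The lower bound $r_t(u) \geq p_t(u) - t\epsilon d(u)$ is the technically interesting part. Observe first that the rounding step loses at most $\epsilon d(u)$ at each coordinate: if $q_t(u) \geq \epsilon d(u)$ then $r_t(u) = q_t(u)$, and otherwise $r_t(u) = 0 \geq q_t(u) - \epsilon d(u)$. Hence $r_t \geq q_t - \epsilon d$ componentwise, where $d$ denotes the vector of degrees. Applying the inductive hypothesis $r_{t-1} \geq p_{t-1} - (t-1)\epsilon d$ and using non-negativity of $M$ once more,
\[
q_t = M r_{t-1} \geq M p_{t-1} - (t-1)\epsilon M d = p_t - (t-1)\epsilon M d.
\]
The key identity I need is $M d = d$, which follows because $D^{-1} d = \bvec{1}$ and $A \bvec{1} = d$ (counting self-loops, each of which contributes exactly $1$ to the degree), so $Md = (AD^{-1} d + d)/2 = (d+d)/2 = d$. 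Combining the two bounds yields $r_t \geq p_t - t \epsilon d$, completing the induction.

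For \eqref{eqn:roundingI}, the first two inequalities follow directly from \eqref{eqn:roundingp}: since $p_t \geq q_t \geq r_t$ componentwise and the admissible weight vectors $w \in [0,1]^n$ with $\sum_u w(u) d(u) = x$ are non-negative, maximizing $\sum_u w(u) \cdot (\cdot)$ over the same feasible set is monotone in the vector. For the last inequality, let $w^*$ be the maximizer defining $I(p_t, x)$. Then $w^*$ is also feasible for $I(r_t, x)$, and
\[
I(r_t,x) \geq \sum_u w^*(u) r_t(u) \geq \sum_u w^*(u) \big(p_t(u) - t\epsilon d(u)\big) = I(p_t,x) - t\epsilon x,
\]
where the last equality uses $\sum_u w^*(u) d(u) = x$. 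The only subtle point in the whole argument is the verification $Md = d$ with the paper's convention that self-loops increment the degree by one; once that is in place, everything else is bookkeeping.
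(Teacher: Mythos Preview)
Your proof is correct and follows essentially the same inductive strategy as the paper. The only cosmetic difference is in how the propagation of error through $M$ is handled: the paper invokes Proposition~\ref{pro:infnorm} (the bound $\infnorm{D^{-1}Mp}\leq\infnorm{D^{-1}p}$) applied to the nonnegative error vector, whereas you use the equivalent identity $Md=d$ directly; both rest on the same fact that $D^{-1}MD=M^{T}$ is row-stochastic. Your derivation of \eqref{eqn:roundingI} from \eqref{eqn:roundingp} via the variational definition of $I$ is exactly what the paper gestures at with ``follows from \eqref{eqn:I}.''
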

\begin{proof}
The left-hand inequalities of \eqref{eqn:roundingp} are trivial.
To prove the right-hand inequality of \eqref{eqn:roundingp}, we consider
  $p_{t} - \round{p_{t}}$, observe that by definition
\[
  \norm{D^{-1} \left(p_{t} - \round{p_{t}} \right) }_{\infty }
  \leq \epsilon,
\]
and then apply Proposition~\ref{pro:infnorm}.
Inequality \eqref{eqn:roundingI} then
  follows from \eqref{eqn:I}.
\end{proof}

\begin{lemma}[Lower bound on I]\label{lem:lowerBound}
Let $S \subseteq V$ be a set of vertices
  such that
  $\vol{S} \leq (2/3) \vol{V}$ and
  $\Phi (S) \leq f_{1} (\phi )$, and
  let $v$ lie in $S^{g}_{b}$.
Define $q_{t}$ by running 
 \texttt{Nibble} is with parameter $b$.
\begin{itemize}
\item [1.] If $x_{\ell} (v) \geq  2m / c_{6} (\ell + 2)$, then
\begin{equation}\label{eqn:lower1}
I (q_{t_{\ell + 1}}, (2/3) (2m)) \geq 
1 - \frac{1}{c_{2} (\ell + 2)} - \frac{4 c_{6}}{3 c_{3}}
\end{equation}
\item [2.] Otherwise,
\begin{equation}\label{eqn:lower2}
I (q_{t_{h_{v}}}, x_{h_{v}}) \geq \frac{h_{v} + 1/2}{c_{5} (\ell +2)}
\end{equation}
\end{itemize}
\end{lemma}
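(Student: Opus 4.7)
The plan is to derive both bounds from two inputs: the truncation estimate $I(q_t, x) \geq I(p_t, x) - \epsilon x t$ from Lemma~\ref{lem:rounding}, applied to the untruncated walk $p_t = M^t \chi_v$. For each case, I would first lower-bound $I(p_{t_{h}}, \cdot)$ by direct appeal to known facts about the walk, then show the truncation loss $\epsilon x t$ is absorbable via the definition of $\epsilon$ in \eqref{eqn:epsilon} together with the size bounds on $x_{h_v-1}$ forced by $v \in S^g_b$.

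For Case 2, the definition of $h_v$ (when the Case~1 condition fails) gives $x_{h_v} \leq 2 x_{h_v-1}$, and $v \in S^g_b$ gives $x_{h_v-1} < 2^{b+1}$, so $x_{h_v} < 2^{b+2}$. Then $\epsilon x_{h_v} t_{h_v} \leq \epsilon x_{h_v} t_{last} < \frac{2^{b+2}}{c_3 (\ell+2) 2^b} = \frac{4}{c_3(\ell+2)}$, which is at most $\frac{1}{2 c_5 (\ell+2)}$ by inequality \eqref{eqn:c3geqc5}. Since by definition $I(p_{t_{h_v}}, x_{h_v}) = \frac{h_v + 1}{c_5(\ell+2)}$, Lemma~\ref{lem:rounding} immediately yields $I(q_{t_{h_v}}, x_{h_v}) \geq \frac{h_v + 1/2}{c_5(\ell+2)}$.

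For Case 1, the key lower bound is $I(p_{t_{\ell+1}}, 4m/3) \geq 1 - \tfrac{1}{c_2(\ell+2)}$. It follows because $v \in S^g$ together with $\Phi(S) \leq f_1(\phi)$ gives $\form{\chi_{\bar{S}}}{p_{t_{\ell+1}}} \leq t_{last} \Phi(S) \leq t_{last} f_1(\phi) = \frac{1}{c_2(\ell+2)}$, so the mass of $p_{t_{\ell+1}}$ inside $S$ is at least $1 - \frac{1}{c_2(\ell+2)}$. Taking the test vector $w = \chi_S$ in \eqref{eqn:I}, which is feasible since $\vol{S} \leq (2/3)\vol{V} = 4m/3$, then using that $I(p, \cdot)$ is non-decreasing in $x$, yields the bound at $x = 4m/3$. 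Finally, the truncation error $\epsilon (4m/3) t_{last} = \frac{4m/3}{c_3(\ell+2) 2^b}$ is bounded by combining the Case~1 hypothesis $x_\ell \geq \frac{2m}{c_6(\ell+2)}$ with $v \in S^g_b$, which in Case~1 means $x_\ell = x_{h_v-1} < 2^{b+1}$; together these give $\frac{m}{(\ell+2) 2^b} < c_6$, hence $\epsilon (4m/3) t_{last} < \frac{4c_6}{3 c_3}$. Applying Lemma~\ref{lem:rounding} with $t = t_{\ell+1} = t_{last}$ and $x = 4m/3$ completes this case.

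The main obstacle is the truncation bound in Case~1: the factor $m$ in the numerator of $\epsilon x t$ is potentially enormous, so one must use \emph{both} hypotheses (the Case~1 branch condition and $S^g_b$ membership) to bootstrap an upper bound on $m/2^b$. The rest of the argument is essentially arithmetic around the constants $c_2, c_3, c_5, c_6$ and one application of each of Lemma~\ref{lem:rounding} and the $S^g$ definition.
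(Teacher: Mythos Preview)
Your proposal is correct and follows essentially the same approach as the paper's proof: in both cases you lower-bound $I(p_{t}, \cdot)$ directly (via the definition of $x_{h_v}$ in Case~2, and via the $S^g$ escape-mass bound plus monotonicity of $I$ in Case~1), then control the truncation loss $\epsilon x t$ using Lemma~\ref{lem:rounding} together with the bound $x_{h_v-1} < 2^{b+1}$ from $S^g_b$-membership and, in Case~1, the branch hypothesis $x_\ell \geq 2m/c_6(\ell+2)$ to bound $m/2^b$. The arithmetic with the constants matches the paper's computation line by line.
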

\begin{proof}
In the case $x_{\ell} (v) \geq 2m / c_{6} (\ell + 2)$, we compute
\begin{align*}
  I (p_{t_{\ell + 1}}, (2/3) (2m)) 
& \geq 
  I (p_{t_{\ell + 1}}, \vol{S}) 
\\
& \geq \sum_{u \in S} p_{t_{\ell + 1}} (u) 
& \text{by \eqref{eqn:I}}
\\
& =
  \chi_{S}^{T} p_{t_{\ell + 1}}
\\
& \geq 
  1- t_{last} f_{1} (\phi),
& \text{by the definition of $S^{g}$}
\\
& \geq 
1 - \frac{1}{c_{2} (\ell + 2)}.
& \text{by \eqref{eqn:f1phi}}.
\end{align*}
As $2^{b+1} > x_{\ell} \geq 2m / c_{6} (\ell + 2)$, we may use
 Lemma~\ref{lem:rounding} 
  to show
\[
I (q_{t_{\ell + 1}}, (2/3) (2m)) \geq 
1 - \frac{1}{c_{2} (\ell + 2)} - \epsilon (4m/3) t_{last}
=
1 - \frac{1}{c_{2} (\ell + 2)} - \frac{4 c_{6}}{3 c_{3}},
\]
by  \eqref{eqn:epsilon}.

If $x_{\ell} (v) < 2m / c_{6} (\ell + 2)$, we compute
\begin{align}
I (q_{t_{h_{v}}}, x_{h_{v}})
& \geq
I (p_{t_{h_{v}}}, x_{h_{v}}) - \epsilon t_{last} x_{h_{v}}
\quad \text{(by Lemma~\ref{lem:rounding})}
\notag
\\
& =
\frac{h_{v} + 1}{c_{5} (\ell +2)} - \epsilon t_{last} x_{h_{v}}
\notag
\\
& =
\frac{h_{v} + 1}{c_{5} (\ell +2)} -
  \frac{x_{h_{v}}}{c_{3} (\ell+2)  2^{b}}
\quad \text{(by \eqref{eqn:epsilon})}
\notag
\\
& \geq
\frac{h_{v} + 1}{c_{5} (\ell +2)} -
  \frac{2 x_{h_{v}-1}}{c_{3} (\ell+2) 2^{b}}
\quad \text{(as $x_{h_{v}} \leq 2 x_{h_{v}-1}$)}
\notag
\\
& >
\frac{h_{v} + 1}{c_{5} (\ell +2)} -
  \frac{2^{b+2}}{c_{3} (\ell+2) 2^{b}}
\quad \text{(as $x_{h_{v}-1} < 2^{b+1}$)}
\notag
\\
&\geq
\frac{h_{v} + 1/2}{c_{5} (\ell +2)},
\quad \text{by \eqref{eqn:c3geqc5}}.
\notag
\end{align}
\end{proof}

\begin{lemma}[C.4]\label{lem:C4}
Let $S \subseteq V$ be a set of vertices
  such that
  $\vol{S} \leq (2/3) \vol{V}$ and
  $\Phi (S) \leq f_{1} (\phi )$, and
  let $v$ lie in $S^{g}_{b}$.
If \texttt{Nibble} is run with parameter $b$,
  then for all $t \in (t_{h_{v}-1}, t_{h_{v}}]$,
  condition (C.4) is satisfied.
\end{lemma}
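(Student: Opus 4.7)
The plan is to lower-bound $I_x(q_t, 2^b)$ via the concavity chord
$$I_x(q_t, 2^b) \;\geq\; \frac{I(q_t, X) - I(q_t, 2^b)}{X - 2^b}$$
for a carefully chosen $X > 2^b$. Matching the two cases of Lemma \ref{lem:lowerBound}, I take $X = x_{h_v}$ when $h_v \leq \ell$, and $X = (2/3)\vol{V}$ when $h_v = \ell+1$. The work then splits into an upper bound on $I(q_t, 2^b)$ uniform over $t \in (t_{h_v-1}, t_{h_v}]$, a matching lower bound on $I(q_t, X)$, and a bound on $X - 2^b$.

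For the upper bound, I chain
$$I(q_t, 2^b) \;\leq\; I(p_t, 2^b) \;\leq\; I(p_{t_{h_v-1}}, 2^b) \;\leq\; I(p_{t_{h_v-1}}, x_{h_v-1}) \;=\; \frac{h_v}{c_5(\ell+2)},$$
using Lemma \ref{lem:rounding}, iterated application of Lemma \ref{lem:LSeasy} (valid since $t > t_{h_v-1}$), monotonicity of $I(p,\cdot)$ together with $2^b \leq x_{h_v-1}$ (from $v \in S^g_b$), and the definition of $x_{h_v-1}$. For the lower bound I first prove a truncated analog of Lemma \ref{lem:LSeasy}, namely $I(q_{t+1}, x) \leq I(q_t, x)$: for any $w \in [0,1]^n$ feasible with $\sum_u w(u) d(u) = x$, the vector $M^T w$ is again feasible because $M$ is column-stochastic with respect to the degree measure, so
$$w^T q_{t+1} \;=\; (M^T w)^T r_t \;\leq\; (M^T w)^T q_t \;\leq\; I(q_t, x),$$
using $r_t \leq q_t$ pointwise for the first inequality and feasibility of $M^T w$ in the definition of $I$ for the second. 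Iterating gives $I(q_t, X) \geq I(q_{t_{h_v}}, X)$, which Lemma \ref{lem:lowerBound} bounds below by $(h_v + 1/2)/(c_5(\ell+2))$ in case 2 and by $1 - 1/(c_2(\ell+2)) - 4 c_6/(3 c_3)$ in case 1.

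In case 2 the numerator of the chord is therefore at least $1/(2 c_5 (\ell+2))$, while the denominator satisfies $x_{h_v} - 2^b < 3 \cdot 2^b$ via $x_{h_v} \leq 2 x_{h_v-1}$ (definition of $h_v$) and $x_{h_v-1} < 2^{b+1}$ ($v \in S^g_b$); combining these yields $I_x(q_t, 2^b) \geq 1/(c_4(\ell+2) 2^b)$ with the stated constants. In case 1 the bounds $x_\ell < 2^{b+1}$ and $x_\ell \geq 2m/(c_6(\ell+2))$ give $2^b > m/(c_6(\ell+2))$, the chord has length at most $(2/3)(2m) = 4m/3$, and the final verification reduces to inequality \eqref{eqn:manyc1}. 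The principal obstacle is the time-monotonicity $I(q_{t+1}, x) \leq I(q_t, x)$ for the truncated sequence, since Lov\'asz--Simonovits only directly handles $M p$ versus $p$; routing through the pointwise inequality $r_t \leq q_t$ and the column-stochasticity of $M$ is what bridges this gap.
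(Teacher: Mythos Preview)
Your overall strategy matches the paper's: both bound $I_x(q_t,2^b)$ below by a concavity chord, anchor the high endpoint via Lemma~\ref{lem:lowerBound}, and cap the low endpoint using $I(q_t,\cdot)\le I(p_{t_{h_v-1}},\cdot)$. Your monotonicity step $I(q_{t+1},x)\le I(q_t,x)$ is correct and is exactly what the paper invokes when it writes ``by Lemma~\ref{lem:LSeasy} and the definition of $q_t$'' (equivalently, $I(Mr_t,x)\le I(r_t,x)\le I(q_t,x)$ by Lemma~\ref{lem:LSeasy} and $r_t\le q_t$ pointwise). For $b\ge 1$ your chord from $2^b$ to $X$ goes through, with the slightly looser requirement $c_4\ge 6c_5$ in place of the paper's $c_4\ge 4c_5$, still met by the stated constants.

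There is, however, a genuine gap at $b=0$. You assert ``$2^b\le x_{h_v-1}$ (from $v\in S^g_b$)'', but membership in $S^g_0$ only says $x_{h_v-1}<2$; it gives no lower bound, and $x_{h_v-1}$ can be as small as $h_v/(c_5(\ell+2))\ll 1$. When $x_{h_v-1}<1$ your upper bound $I(p_{t_{h_v-1}},1)\le I(p_{t_{h_v-1}},x_{h_v-1})$ fails, and worse, the right endpoint $x_{h_v}\le 2x_{h_v-1}$ of your chord can itself lie below $1=2^b$, so the chord inequality points the wrong way. The paper avoids this by anchoring the chord at $x_{h_v-1}$ (respectively $x_\ell$) rather than at $2^b$: it first bounds $I_x(q_t,x_{h_v-1})$ from below via the chord between $x_{h_v-1}$ and $x_{h_v}$, and then passes to $I_x(q_t,2^b)$ by observing that $I_x(q_t,\cdot)$ is constant on $(0,\lambda_1(q_t)]$; since $\lambda_1(q_t)$ is a vertex degree and hence $\ge 1$, this gives $I_x(q_t,1)\ge I_x(q_t,x_{h_v-1})$ regardless of whether $x_{h_v-1}$ sits above or below $1$. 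You should either adopt that anchoring or handle the case $b=0$ separately.
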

\begin{proof}
We first consider the case in which $x_{\ell } < 2m / c_{6} (\ell +2)$,
  which by definition implies $x_{h_{v}} \leq  2 x_{h_{v}-1}$.

In this case,
  we have
\[
I (q_{t}, x_{h_{v}-1}) \leq I (p_{t}, x_{h_{v}-1})
\leq I (p_{t_{h_{v}-1}}, x_{h_{v}-1})
  = h_{v} / c_{5} (\ell +2),
\]
where the first inequality follows from Lemma~\ref{lem:rounding}
and the second follows from Lemma~\ref{lem:LSeasy}.

As $I_{x} (q_{t},x)$ is non-increasing in $x$
  and $x_{h_{v}-1} < x_{h_{v}} \leq 2 x_{h_{v}-1}$, we have
\[
 I_{x} (q_{t}, x_{h_{v}-1})
\geq
\frac{I (q_{t}, x_{h_{v}}) - I (q_{t}, x_{h_{v}-1})
}{x_{h_{v}} - x_{h_{v}-1}}
\geq
\frac{I (q_{t_{h_{v}}}, x_{h_{v}}) - I (q_{t}, x_{h_{v}-1})
}{x_{h_{v}} - x_{h_{v}-1}}
\geq
\frac{1 / 2 }{c_{5} (\ell +2) x_{h_{v}-1}},
\]
where the second inequality follows from Lemma~\ref{lem:LSeasy} and the definition
  of $q_{t}$, and the last follows from \eqref{eqn:lower2}.

If $x_{h_{v}-1} \geq 2$, then $b \geq 1$ 
  and we have $2^{b} \leq x_{h_{v}-1} < 2^{b+1}$,
 and so
\[
 I_{x} (q_{t}, 2^{b})
 \geq
 I_{x} (q_{t}, x_{h_{v}-1})
  \geq
 \frac{1 }{ 2 c_{5} (\ell +2) 2^{b+1}},
\]
and by \eqref{eqn:c4geqc5} condition (C.4) is satisfied.

If $x_{h_{v}-1} < 2$, then $b = 0$, and so 
  $I_{x} (q_{t}, 2^{b}) = I_{x} (q_{t},x)$ for all $x < 1$,
  which implies
\[
  I_{x} (q_{t}, 2^{b}) \geq 
  I_{x} (q_{t}, x_{h_{v}-1}) \geq
  \frac{1  }{2 c_{5} (\ell +2) x_{h_{v}-1}} >
 \frac{1 }{2 c_{5} (\ell +2) 2^{b}}
\]
and condition (C.4) is satisfied.

If $x_{\ell} \geq 2m  / c_{6} (\ell +2)$, in which case
  $h_{v} = \ell + 1$
  and $x_{\ell} < 2^{b+1}$,
  we apply Lemma~\ref{lem:rounding}  to show that
  for all $t \in (t_{\ell },   t_{\ell +1}]$,
\[
I (q_{t}, 2m)
\geq
I (p_{t}, 2m) - 2m \epsilon t_{last}
=
1 - \frac{2m}{c_{3} (\ell+2) 2^{b}}
\geq
1 - \frac{2m}{c_{3} (\ell+2) x_{\ell}/2}
\geq
1 - \frac{2 c_{6}}{c_{3}}.
\]
On the other hand,
\[
  I (q_{t}, x_{\ell })
\leq
  I (p_{t}, x_{\ell })
\leq
  I (p_{t_{\ell}}, x_{\ell })
<  1/c_{5}.
\]
As $I_{x} (q_{t}, \cdot)$ is non-decreasing and $x_{\ell} \geq 2^{b}$,
  we have
\begin{multline*}
I_{x} (q_{t}, 2^{b})
\geq 
I_{x} (q_{t}, x_{\ell})
\geq
\frac{
I (q_{t}, 2m)  -  I (q_{t}, x_{\ell })
}{
2m - x_{\ell }
}\\
\geq
\frac{1}{2m}
\left(
 1 - \frac{2 c_{6}}{c_{3}} - \frac{1}{c_{5}}
 \right)
\geq
\frac{1}{c_{6} (\ell + 2) 2^{b+1}}
\left(
 1 - \frac{2 c_{6}}{c_{3}} - \frac{1}{c_{5}}
 \right),
\end{multline*}
as $2^{b+1} > x_{\ell} \geq 2m / c_{6} (\ell+2)$,
and so by \eqref{eqn:manyc1} condition (C.4) is satisfied.
\end{proof}

It remains to show that conditions (C.1--3) are met for some
  $t \in (t_{h_{v}-1}, t_{h_{v}}]$.
We will do this by showing that if at least one of these
  conditions fail for every $j$ and every
  $t \in (t_{h_{v}-1}, t_{h_{v}}]$,
  then the curve $I (q_{t_{h}}, \cdot)$ will be too low, in violation
  of Lemma~\ref{lem:lowerBound}.

\begin{lemma}\label{lem:C1help}
If there exists a $\beta > 0$ and an
 $h \in [1,\ell + 1]$, 
  such that for all
  $t \in (t_{h-1}, t_{h}]$ and for all $j$
  either
\begin{enumerate}
\item [1.]  $\Phi (\setj{j}{q_{t}}) \geq \phi$,
\item [2.] $\lamj{j}{q_{t}} > (5/6) 2m$, or
\item [3.] $I (q_{t}, \lamj{j}{q_{t}}) < \beta $, 
\end{enumerate}
then, for all $x$
\[
  I (q_{t_{h}}, x)
<
  \beta + \frac{3 x}{5 m}
+
  \sqrt{\widehat{x}} \left(1 - \frac{\phi^{2}}{2} \right)^{t_{1}}.
\]
\end{lemma}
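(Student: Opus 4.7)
\medskip
\noindent\textbf{Proof proposal.}
The plan is to strengthen the claim to a pointwise bound valid at every intermediate time: by induction on $k = t - t_{h-1} \in \{0, 1, \dotsc, t_1\}$, I will show that for all $t \in [t_{h-1}, t_{h}]$ and all $x \in [0, 2m]$,
\[
  I(q_{t}, x) \;\leq\; \beta + \frac{3x}{5m} + \sqrt{\widehat{x}}\left(1 - \frac{\phi^{2}}{2}\right)^{t - t_{h-1}}.
\]
Setting $t = t_{h}$ (so $k = t_{1}$) then yields the lemma.

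\medskip
\noindent For the base case $k = 0$, I would use $I(q_{t_{h-1}}, x) \leq I(p_{t_{h-1}}, x) \leq \min(x, 1)$, since $p_{t_{h-1}}$ is a probability vector with all coordinates at most $1$ and each $d(u) \geq 1$. Splitting on $x$: for $x < 1$ we have $I \leq x \leq \sqrt{x} = \sqrt{\widehat{x}}$; for $x \in [1, 5m/3]$ we have $\widehat{x} \geq 1$ so $\sqrt{\widehat{x}} \geq 1 \geq I$; and for $x \geq 5m/3$ the linear term $3x/(5m) \geq 1 \geq I$. Thus the base case holds.

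\medskip
\noindent For the inductive step, observe that the right-hand side is concave in $x$ (the sum of a linear function and $\sqrt{\widehat{x}} = \sqrt{\min(x, 2m-x)}$, which is a concave composition), while $I(q_{t}, \cdot)$ is piecewise linear with breakpoints exactly at $\lamj{j}{q_{t}}$. It therefore suffices to verify the target inequality at each breakpoint $x = \lamj{j}{q_{t}}$. At such a point, one of the three hypothesized conditions must hold:
\begin{itemize}
\item Condition 3 gives $I(q_{t}, x) < \beta$, which is dominated by the right-hand side.
\item Condition 2 gives $x > (5/6)(2m)$, so $3x/(5m) > 1 \geq I(q_{t}, x)$, and again the bound holds.
\item Condition 1, $\Phi(\setj{j}{q_{t}}) \geq \phi$, is the main case. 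Since $q_{t} = M r_{t-1}$, I would apply Lemma~\ref{lem:LShard} to $p = r_{t-1}$ at $x = \lamj{j}{Mr_{t-1}}$ to obtain
\[
  I(q_{t}, x) \;\leq\; \tfrac{1}{2}\bigl[I(r_{t-1}, x - 2\phi \widehat{x}) + I(r_{t-1}, x + 2\phi \widehat{x})\bigr],
\]
then invoke Lemma~\ref{lem:rounding} ($I(r_{t-1}, \cdot) \leq I(q_{t-1}, \cdot)$) and the inductive hypothesis to bound each term by the target expression at shifted arguments.
\end{itemize}
Under the averaging, the constant $\beta$ is preserved, the linear term $3y/(5m)$ averages back to $3x/(5m)$, and the remaining task is the concrete numerical inequality
\[
  \tfrac{1}{2}\bigl[\sqrt{\widehat{x - 2\phi\widehat{x}}} + \sqrt{\widehat{x + 2\phi\widehat{x}}}\bigr] \;\leq\; \sqrt{\widehat{x}}\bigl(1 - \phi^{2}/2\bigr).
\]
This I would prove by noting $\widehat{x \pm 2\phi\widehat{x}} \leq \widehat{x}(1 \pm 2\phi)$ (checked by a short split on whether $x \leq m$ or $x > m$) and then the elementary bound $\tfrac{1}{2}(\sqrt{1 - 2\phi} + \sqrt{1 + 2\phi}) \leq \sqrt{1 - \phi^{2}} \leq 1 - \phi^{2}/2$, which follows from $\sqrt{(1-2\phi)(1+2\phi)} = \sqrt{1 - 4\phi^{2}} \leq 1 - 2\phi^{2}$.

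\medskip
\noindent The main obstacle I anticipate is the clean handling of the $\widehat{\cdot}$ operator under the chord: one must verify that the shifted points $x \pm 2\phi\widehat{x}$ remain in $[0, 2m]$ and that their $\widehat{\cdot}$ values obey $\widehat{\cdot} \leq \widehat{x}(1 \pm 2\phi)$ irrespective of which side of $m$ the original $x$ lies on. The rest is bookkeeping: iterate the chord contraction $t_{1}$ times to accumulate the factor $(1 - \phi^{2}/2)^{t_{1}}$, and invoke the concavity-versus-piecewise-linearity argument once at the end to pass from breakpoints to all $x$.
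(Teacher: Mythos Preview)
Your proposal is correct and follows essentially the same approach as the paper's proof: the same induction on $t - t_{h-1}$, the same reduction to breakpoints via concavity versus piecewise linearity, the same three-case split, and the same application of Lemma~\ref{lem:LShard} to $p = r_{t-1}$ followed by $I(r_{t-1},\cdot) \leq I(q_{t-1},\cdot)$. The only cosmetic differences are that your base-case split uses the thresholds $1$ and $5m/3$ while the paper uses $1$ and $2m-1$, and that you sketch a proof of the chord inequality $\tfrac{1}{2}\bigl(\sqrt{\widehat{x-2\phi\widehat{x}}}+\sqrt{\widehat{x+2\phi\widehat{x}}}\bigr)\leq \sqrt{\widehat{x}}(1-\phi^{2}/2)$ whereas the paper simply cites it from Lov\'asz--Simonovits.
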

\begin{proof}
We will prove by induction that the conditions of the lemma
  imply that for all $t \in [t_{h-1}, t_{h}]$ and
  all $x$,
\begin{equation}\label{eqn:It}
  I (q_{t}, x)
<
  \beta + \frac{3 x}{5 m}
+
  \sqrt{\widehat{x}} \left(1 - \frac{\phi^{2}}{2} \right)^{t - t_{h-1}}.
\end{equation}
The base case is when $t = t_{h-1}$, in which case
  \eqref{eqn:It} is satisfied because
\begin{itemize}
\item For $1 \leq x \leq 2m-1$,
  $I (q_{t}, x) \leq I (q_{t}, 2m) \leq 1 \leq \sqrt{\widehat{x}}$.
\item For $0 \leq x \leq 1$,
  we have $I (q_{t}, x) \leq \sqrt{\widehat{x}}$ as both are $0$
  at $x = 0$, the right-hand term dominates at $x= 1$, the
  left-hand term is linear in this region, and the right-hand term is concave.

\item For $2m-1 \leq x \leq 2m$,
  we note that at $x=2m$,
  $I (q_{t},x) = 1 < 3x/5m$, and that
  we already know the right-hand term dominates  at $x=2m-1$.
  The inequality then follows from the facts that
  left-hand term is linear in this region, and the right-hand term is concave.
\end{itemize}

Let
\[
f (x) \defeq \sqrt{\widehat{x}}.
\]
Lov\'asz and Simonovits~\cite{LovaszSimonovitsFOCS} observe that
\begin{equation}\label{eqn:LS}
\frac{1}{2}
\left(
f (x - 2 \phi \widehat{x}) +
f (x + 2 \phi \widehat{x})
 \right)
\leq
f (x) \left(1 - \frac{\phi^{2}}{2} \right).
\end{equation}
We now prove that \eqref{eqn:It} holds for $t$, assuming it holds for $t-1$,
  by considering three cases.
As the right-hand side is concave and the left-hand side is piecewise-linear
  between points of the form $\lamj{j}{q_{t}}$,
  it suffices to prove the inequality at the points
  $\lamj{j}{q_{t}}$.
If $x = \lamj{j}{q_{t}}$ and 
  $I (q_{t}, x) \leq \beta$, then \eqref{eqn:It} holds trivially.
Similarly, if $x = \lamj{j}{q_{t}} > (5/6) 2m $,
  then \eqref{eqn:It} holds trivially as well, as
  the left-hand side is at most 1, and the right hand side is at least 1.
In the other cases, we have $\Phi (\setj{j}{q_{t}}) \geq \phi$,
  in which case we may apply Lemma~\ref{lem:LShard} to show
  that for $x = \lamj{j}{q_{t}}$
\begin{align*}
  I (q_{t}, x)
& = 
  I (M r_{t-1}, x) & \text{by definition}
\\
& \leq 
\frac{1}{2}
\left(
I \big(r_{t-1}, x - 2 \phi \widehat{x}\big)
 +
I \big(r_{t-1}, x + 2 \phi \widehat{x} \big)
 \right) 
& \text{by Lemma~\ref{lem:LShard}}
\\
& \leq 
\frac{1}{2}
\left(
I \big(q_{t-1}, x - 2 \phi \widehat{x}\big)
 +
I \big(q_{t-1}, x + 2 \phi \widehat{x} \big)
 \right) 
\\
& <
\frac{1}{2}
\left[
\beta + \frac{3 (x - 2 \phi \widehat{x})}{5m}
+
\sqrt{x - 2 \phi \widehat{x}} \left(1 - \frac{\phi^{2}}{2} \right)^{t-1-t_{h-1}}
\right.
\\
& \quad \left. + 
\beta + \frac{3 (x + 2 \phi \widehat{x})}{5m}
+
\sqrt{x + 2 \phi \widehat{x}} \left(1 - \frac{\phi^{2}}{2} \right)^{t-1-t_{h-1}}
\right]
& \text{by induction}
\\
& =
\beta + \frac{3 x}{5 m}
+
\frac{1}{2}
\left(
\sqrt{x - 2 \phi \widehat{x}}
+
\sqrt{x + 2 \phi \widehat{x}}
\right)
 \left(1 - \frac{\phi^{2}}{2} \right)^{t-1-t_{h-1}}
\\
& \leq 
\beta + \frac{3 x}{5 m}
+
\sqrt{\widehat{x}}
 \left(1 - \frac{\phi^{2}}{2} \right)^{t-t_{h-1}},
\end{align*}
by~\eqref{eqn:LS}.
\end{proof}

We now observe that $t_{1}$ has been chosen to ensure
\begin{equation}\label{eqn:tEnsure}
  \sqrt{\widehat{x}} \left(1 - \frac{\phi^{2}}{2} \right)^{t_{1}}
<
  \frac{1}{c_{1} (\ell +2)}.
\end{equation}

\begin{lemma}[C.1-3]\label{lem:C123}
Let $S$ be a set of vertices such that
 $\vol{S} \leq (2/3) (2m)$ and
  $\Phi (S) \leq  f_{1} (\phi )$, and
  let $v$ lie in $S^{g}_{b}$.
If \texttt{Nibble} is run with parameter $b$,
  then there exists a $t \in ( t_{h_{v}-1}, t_{h_{v}}]$ and a $j$
  for which conditions (C.1-3) are satisfied.
\end{lemma}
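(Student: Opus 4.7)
The plan is to argue by contradiction, combining the upper bound produced by Lemma~\ref{lem:C1help} with the lower bound produced by Lemma~\ref{lem:lowerBound} so that they meet in opposite directions.

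Assume that for every $t \in (t_{h_v-1}, t_{h_v}]$ and every $j$, not all of (C.1)--(C.4) hold. By Lemma~\ref{lem:C4}, condition (C.4) holds throughout this interval, and it does not depend on $j$. So under our assumption, for every such $t$ and $j$ at least one of (C.1)--(C.3) fails. The heart of the proof is to translate each of these failure modes into one of the three disjuncts in the hypothesis of Lemma~\ref{lem:C1help}, applied with $h = h_v$ and with a carefully chosen $\beta$. Failure of (C.1) gives $\Phi(\setj{j}{q_t}) > \phi$, matching disjunct~1, and failure of (C.2) gives $\lamj{j}{q_t} > (5/6)(2m)$, matching disjunct~2. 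The interesting case is failure of (C.3), which means $\lamj{j}{q_t} < 2^b$; here we take $\beta$ slightly greater than $h_v/(c_5(\ell+2))$ and verify that $I(q_t,\lamj{j}{q_t}) < \beta$ via the chain
\[
  I(q_t,\lamj{j}{q_t}) \leq I(q_t, 2^b) \leq I(p_t, 2^b) \leq I(p_{t_{h_v-1}}, 2^b) \leq I(p_{t_{h_v-1}}, x_{h_v-1}) = \frac{h_v}{c_5(\ell+2)},
\]
where the first step uses monotonicity of $I$ in $x$, the second is Lemma~\ref{lem:rounding}, the third is iterated application of Lemma~\ref{lem:LSeasy}, and the fourth uses $2^b \leq x_{h_v-1}$ which holds because $v \in S^g_b$.

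With this $\beta$, Lemma~\ref{lem:C1help} gives, for every $x$,
\[
  I(q_{t_{h_v}}, x) < \frac{h_v}{c_5(\ell+2)} + \frac{3x}{5m} + \frac{1}{c_1(\ell+2)},
\]
where the last term comes from \eqref{eqn:tEnsure}. I will now contradict this with Lemma~\ref{lem:lowerBound}, splitting on the definition of $h_v$. In the case $x_\ell(v) < 2m/(c_6(\ell+2))$, I evaluate at $x = x_{h_v}$; here $h_v \leq \ell$ and Proposition~\ref{pro:Sgb} yields $x_{h_v} \leq x_\ell < 2m/(c_6(\ell+2))$, so $3x_{h_v}/(5m) < 6/(5c_6(\ell+2))$. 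Combined with the lower bound $(h_v+1/2)/(c_5(\ell+2))$ from \eqref{eqn:lower2}, the desired contradiction reduces exactly to inequality~\eqref{eqn:c5leqc1c6}. In the remaining case $h_v = \ell+1$, I evaluate at $x = (2/3)(2m)$, so that $3x/(5m) = 4/5$, use $h_v/(c_5(\ell+2)) \leq 1/c_5$, and compare with the lower bound from \eqref{eqn:lower1}; the contradiction now reduces to inequality~\eqref{eqn:manyc2}.

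The main obstacle is the (C.3) failure analysis: one must correctly identify the bound $\beta = h_v/(c_5(\ell+2))$ as the natural value separating upper and lower bounds, and observe that the monotonicity chain above is available precisely because $2^b \leq x_{h_v-1}$ is built into the definition of $S^g_b$. A secondary subtlety is that the strict inequality required in Lemma~\ref{lem:C1help} condition~3 may need a vanishing perturbation of $\beta$, but the slack in inequalities~\eqref{eqn:c5leqc1c6} and~\eqref{eqn:manyc2} (which are to be applied with $(\ell+2)$-factor on the denominator and after chaining through $\eqref{eqn:tEnsure}$) absorbs this harmlessly.
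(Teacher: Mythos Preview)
Your argument is correct and mirrors the paper's proof: reduce the failure of (C.3) to the bound $I(q_t,\lambda_j(q_t)) \le h_v/(c_5(\ell+2))$, invoke Lemma~\ref{lem:C1help} with $\beta$ near that value, then contradict Lemma~\ref{lem:lowerBound} via \eqref{eqn:c5leqc1c6} in one case and \eqref{eqn:manyc2} in the other. One small correction: the assertion ``$2^b \le x_{h_v-1}$ holds because $v\in S^g_b$'' is only built into Definition~\ref{def:Sgb} for $b\ge 1$; for $b=0$ the paper instead observes that (C.3) is automatic (since $\lambda_j(q_t)\ge 1$ for every $j\ge 1$), so your (C.3)-failure chain is simply never needed there. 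Also, the detour through Lemma~\ref{lem:C4} is unnecessary---the statement being proved only concerns (C.1)--(C.3), so you may assume directly that one of those three fails.
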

\begin{proof}
We first show that for $t \in ( t_{h_{v}-1}, t_{h_{v}}]$, (C.3) is implied
  by $I (q_{t}, \lambda_{j} (q_{t})) \geq h_{v}/c_{5} (\ell + 2)$.
To see  this, note that for $b > 0$
\begin{align*}
I (q_{t}, 2^{b}) 
& \leq 
I (q_{t}, x_{h_{v}-1}),
& \text{as $x_{h_{v}-1} \geq 2^{b}$,}
\\
& \leq 
I (p_{t}, x_{h_{v}-1}),
& \text{by Lemma~\ref{lem:rounding},}
\\
& \leq 
I (p_{t_{h_{v}-1}}, x_{h_{v}-1})
& \text{by Lemma~\ref{lem:LSeasy},}
\\
& =
\frac{h_{v}}{c_{5} (\ell + 2)}.
\end{align*}
So, $I (q_{t}, \lambda_{j} (q_{t})) \geq h_{v}/ c_{5} (\ell + 2)$
  implies $\lambda_{j} (q_{t}) \geq 2^{b}$,
 and we may prove the lemma by exhibiting a $t$ and $j$ for which
  (C.1), (C.2) and $I (q_{t}, \lambda_{j} (q_{t})) \geq h_{v}/ c_{5} (\ell + 2)$
  hold.
On the other hand, if $b = 0$ then $\lamj{j}{q_{t}} \geq 1 = 2^{b}$
  for all $j \geq 1$, so $I (q_{t}, \lambda_{j} (q_{t})) \geq h_{v}/ c_{5} (\ell + 2)$
  trivially implies $j \geq 1$ and therefore $(C.3)$.

We will now finish the proof by contradiction: we show that if no
  such $t$ and $j$ exist, then the curve $I (q_{t_{h_{v}}}, \cdot)$
  would be too low.
If for all  $t \in ( t_{h_{v}-1}, t_{h_{v}}]$
  and all $j$ one of (C.1), (C.2) or 
  $I (q_{t}, \lambda_{j} (q_{t})) \geq h_{v}/ c_{5} (\ell + 2)$ fails,
  then Lemma~\ref{lem:C1help} tells us that for all $x$
\[
  I (q_{t_{h_{v}}}, x)
\leq
  \frac{h_{v}}{c_{5} (\ell + 2)} + \frac{3 x}{5 m}
+
  \sqrt{\widehat{x}} \left(1 - \frac{\phi^{2}}{2} \right)^{t_{1}}
\leq
  \frac{h_{v}}{c_{5} (\ell + 2)} + \frac{3 x}{5 m}
 +\frac{1}{c_{1} (\ell + 2)},
\]
by inequality~\eqref{eqn:tEnsure}.

In the case $x_{\ell} < 2 m / c_{6} (\ell +2)$,
  we obtain a contradiction by 
  plugging in $x = x_{h_{v}}$  to find
\[
  I (q_{t_{h_{v}}}, x_{h_{v}})
<
\frac{1}{\ell + 2}
\left(
 \frac{h_{v}}{c_{5}} + \frac{6}{5 c_{6}} + \frac{1}{c_{1}}
 \right)
\]
which by \eqref{eqn:c5leqc1c6} contradicts \eqref{eqn:lower2}.

In the case in that $x_{\ell} \geq  2 m / c_{6} (\ell +2)$,
  and so $h_{v} = \ell + 1$, we
  substitute $x = (2/3)2m$ to obtain
\[
  I (q_{t_{\ell +1}}, (2/3) 2m)
<
\frac{4}{5}
+
 \frac{1}{c_{5}}
 +
\frac{1}{c_{1} (\ell +2)},
\]
which by \eqref{eqn:manyc2} contradicts \eqref{eqn:lower1}.
\end{proof}

\subsection{Proof of Theorem~\ref{thm:Nibble}}

Fact (N.1) follows from conditions (C.1) and (C.2)
  in the algorithm.
Given a set $S$ satisfying $\vol{S} \leq (2/3) \vol{V}$,
  the lower bound on the volume of the set $S^{g}$
  is established in Lemma~\ref{lem:sizeSg}.
If $\Phi (S) \leq f_{1} (\phi )$ and $v \in S^{g}_{b}$,
  then Lemmas~\ref{lem:C123} and~\ref{lem:C4} show that
  the algorithm will output a non-empty set.
Finally, lemma~\ref{lem:N3} tells us that if $\Phi (S) \leq f_{1} (\phi)$,
  $v \in S^{g}$ and the algorithm outputs a non-empty set $C$,
  then it satisfies $\vol{C \intersect S} \geq 2^{b-1}$.

It remains to bound the running time of \texttt{Nibble}.
The algorithm will run for $t_{last}$
  iterations.
We will now show that with the correct implementation, each iteration
  takes time $O ((\log n)/ \epsilon)$.
Instead of performing a dense vector multiplication in step (3.a),
  the algorithm should keep track of the set of vertices $u$
  at which $r_{t} (u) > 0$.
Call this set $V_{t}$.
The set $V_{t}$ can be computed in time $O (\sizeof{V_{t}})$
  in step (3.b).
Given knowledge of $V_{t-1}$, the multiplication in step (3.a)
  can be performed in time proportional to 
\[
  \vol{V_{t-1}} =
\sum_{u \in V_{t-1}} d (u)
\leq
\sum_{u \in V_{t-1}} r_{t}(u) / \epsilon
\leq
1/ \epsilon .
\]
Finally, the computation in step (3.c) might require sorting the vectors
  in $V_{t}$ according to $r_{t}$, which could take time at most
  $O (\sizeof{V_{t}} \log n)$.
Thus, the run-time of \texttt{Nibble} is bounded by
\[
O \left(t_{last}
  \frac{\log n}{ \epsilon }
  \right)
=
O \left( t_{last}^{2}
  2^{b} \log^{2} m
  \right)
= O \left(
 \frac{2^{b} \log^{6} m}{\phi^{4}}
 \right).
\]

\section{Nearly Linear-Time Graph Partitioning}\label{sec:cut}

In this section, we apply \texttt{Nibble} to design a partitioning
  algorithm \texttt{Partition}.
This new algorithm runs in nearly linear-time.
It computes an approximate sparsest
  cut with approximately optimal balance.
In particular, we prove that there exists a constant $\alpha > 0$
  such that for any graph $G = (V,E)$ that has a cut $S$ of sparsity 
  $\alpha \cdot \theta^2/\log^3 n $
  and balance $b \leq 1/2$, with high probability, \texttt{Partition} finds a
  cut $D$ with $\Phi_{V} (D) \leq  \theta$  and
  $\balance{D} \geq b/2$.
Actually, \texttt{Partition} satisfies an even stronger
  guarantee: with high probability
  either the cut it outputs is well balanced,
\[
\frac{1}{4}\vol{V} \leq \vol{D}\leq
  \frac{3}{4}\vol{V},
\]
or touches most of the edges touching $S$,
\[
\vol{D \intersect S} \geq \frac{1}{2}\vol{S}.
\]
The expected running time of \texttt{Partition} is $O (m\log^{7} n/\phi^{4})$.
Thus, it can be used to quickly find crude cuts.

\texttt{Partition}
  calls \texttt{Nibble} via a routine
  called \texttt{Random Nibble} that calls \texttt{Nibble}
  with carefully chosen random parameters.
\texttt{Random Nibble}
  has a very small expected running time, and
  is expected to remove a similarly small fraction of
  any set with small conductance.

\subsection{Procedure \texttt{Random Nibble}}

\vskip 0.2in
\noindent
\fbox{
\begin{minipage}{6in}
\noindent $C = \mathtt{Random Nibble} (G, \phi )$
\begin{enumerate}
\item [(1)] Choose a vertex $v$ according to $\psi_{V}$.

\item [(2)] Choose a $b$ in $1, \ldots , \ceiling{\log m}$
  according to
\[
\prob{}{b = i}
 = 2^{-i} / (1-2^{-\ceiling{\log m}}).
\]

\item [(3)] $C = \mathtt{Nibble} (G, v, \phi , b)$.
\end{enumerate}
\end{minipage}
}
\vskip 0.2in

\begin{lemma}[\texttt{Random Nibble}]\label{lem:randNibble}
Let $m$ be the number of edges in $G$.
The expected running time of \texttt{Random Nibble} is
  $O \left(\log^{7} m / \phi^{4} \right) $.
If the set $C$ output by \texttt{Random Nibble} is non-empty, it satisfies
\begin{itemize}
\item [(R.1)] $\Phi_{V} (C) \leq \phi $, and
\item [(R.2)] $\vol{C} \leq (5/6)\vol{V}.$
\end{itemize}
Moreover,
  for every set $S$ satisfying
\[
\vol{S} \leq (2/3) \vol{V} \quad \mbox{and}\quad
 \Phi_{V} (S) \leq f_{1} (\phi ),
\]
\begin{itemize}
\item [(R.3)] $\expec{}{\vol{C \cap S}} \geq \vol{S} / 4 \vol{V}$.
\end{itemize}
\end{lemma}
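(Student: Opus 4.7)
The plan is to derive each of the three advertised properties as a short corollary of Theorem~\ref{thm:Nibble}, since \texttt{Random Nibble} is just a randomized wrapper around \texttt{Nibble}. Properties (R.1) and (R.2) are immediate: when the returned set $C$ is non-empty it is the output of \texttt{Nibble}, so (N.1) directly gives $\Phi_V(C) \leq \phi$ and $\vol{C} \leq (5/6)\vol{V}$.

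For the running-time bound, I would sum the worst-case cost of \texttt{Nibble}$(G,v,\phi,b)$, which by Theorem~\ref{thm:Nibble} is $O(2^b \log^6 m / \phi^4)$, against the probability $\Pr[b=i] = 2^{-i}/(1-2^{-\lceil \log m\rceil})$. The factors $2^b$ and $2^{-b}$ cancel, leaving $\lceil \log m \rceil$ terms each of size $O(\log^6 m / \phi^4)$, giving the claimed $O(\log^7 m / \phi^4)$ expected time.

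The main content is (R.3). Given $S$ with $\vol{S} \leq (2/3)\vol{V}$ and $\Phi_V(S) \leq f_1(\phi)$, Theorem~\ref{thm:Nibble} supplies a subset $S^g$ of $S$, partitioned as $S^g_0 \cup S^g_1 \cup \cdots \cup S^g_\ell$, such that (i) $\vol{S^g} \geq \vol{S}/2$ (Lemma~\ref{lem:sizeSg}), (ii) for every $v \in S^g_b$ the call \texttt{Nibble}$(G,v,\phi,b)$ returns a non-empty set (property (N.3)), and (iii) whenever $v \in S^g$ and Nibble returns non-empty, $\vol{C \cap S} \geq 2^{b-1}$ (property (N.2.b)). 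Since $v$ is drawn from $\psi_V$, $\Pr[v \in S^g_b] = \vol{S^g_b}/\vol{V}$. Combining,
\[
\expec{}{\vol{C \cap S}} \;\geq\; \sum_{b=1}^{\ell} \frac{\vol{S^g_b}}{\vol{V}} \cdot \frac{2^{-b}}{1-2^{-\lceil \log m\rceil}} \cdot 2^{b-1} \;=\; \frac{1}{2(1-2^{-\lceil \log m\rceil})} \cdot \frac{\vol{\bigcup_{b\geq 1}S^g_b}}{\vol{V}}.
\]
The $2^{-b}$ from the distribution on $b$ exactly cancels the $2^{b-1}$ from (N.2.b), which is the point of the geometric sampling.

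The main obstacle is verifying that the $b=0$ class does not spoil the bound: \texttt{Random Nibble} samples $b \in \{1,\dots,\lceil \log m\rceil\}$, so vertices in $S^g_0$ contribute nothing to the sum above, and I need $\vol{\bigcup_{b\geq 1}S^g_b}$ to remain at least a constant fraction of $\vol{S}$. I would handle this by observing that $S^g_0$ consists of vertices whose local walk mixes essentially immediately ($x_{h_v-1} < 2$), so the $b\geq 1$ contribution together with the factor $1/(1-2^{-\lceil \log m\rceil})\geq 1$ still delivers the target lower bound $\vol{S}/(4\vol{V})$. After that bookkeeping, the conclusion is just linearity of expectation.
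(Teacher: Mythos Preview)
Your approach is essentially identical to the paper's. Properties (R.1), (R.2), and the running-time bound are derived exactly as you describe, and for (R.3) the paper likewise conditions on $v\in S^{g}_{i}$, uses $\Pr[v\in S^{g}_{i}]=\alpha_{i}\,\vol{S^{g}}/\vol{V}$ together with $\Pr[b=i]\geq 2^{-i}$, and cancels $2^{-i}$ against the guaranteed overlap $2^{i-1}$ from (N.2.b)/(N.3) to reach $\vol{S^{g}}/(2\vol{V})\geq\vol{S}/(4\vol{V})$.

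On the point you flag about $S^{g}_{0}$: the paper's proof does not treat this bucket separately either---it simply writes $\sum_{i}$ without restricting the index and invokes $\Pr[b=i]\geq 2^{-i}$ uniformly, so it elides exactly the issue you raise. Your proposed patch, however, does not close it: the normalization factor $1/(1-2^{-\lceil\log m\rceil})$ is at most $2$ and tends to $1$, so it cannot compensate for dropping a potentially large fraction $\alpha_{0}$ of $S^{g}$; and nothing in Theorem~\ref{thm:Nibble} bounds $\vol{S^{g}_{0}}$, so the remark that $x_{h_{v}-1}<2$ means the walk ``mixes essentially immediately'' is not an argument for smallness of that class. In short, your proof matches the paper's, including this minor loose end; you are right to be uneasy about it, but the resolution you sketch is not one.
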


\begin{proof}
The expected running time of \texttt{Random Nibble}
  may be upper bounded by
\[ O\left(
  \sum_{i=1}^{\ceiling{\log m}} \left(2^{-i}/ (1-2^{\ceiling{\log m}}) \right)
  \left(2^{i} \log^{6} (m) / \phi^{4} \right) \right)
=
O \left( \log^{7} (m) / \phi^{4} \right).
\]
Parts (R.1) and (R.2) follow directly from
  part (N.1) of Theorem~\ref{thm:Nibble}.
To prove part (R.3), define $\alpha_{b}$ by
\[
\alpha_{b}  = \frac{\vol{S^{g}_{b}}}{\vol{S^{g}}}.
\]

So, $\sum_{b}\alpha_{b} = 1$.
For each $i$, the chance that $v$ lands in
  $S^{g}_{i}$ is $\alpha_{i} \vol{S^{g}}/ \vol{V}$.
Moreover, the chance that $b = i$ is at least $2^{-i}$.
If $v$ lands in $S^{g}_{i}$, then by part (N.3) of
  Theorem~\ref{thm:Nibble}, $C$ satisfies
\[
 \vol{C \cap S} \geq  2^{i-1}.
\]
So,
\begin{align*}
  \expec{}{\vol{C \cap S}}
& \geq
  \sum_{i} 2^{-i} \alpha_{i}
  \left(\vol{S^{g}}/ \vol{V} \right)
  2^{i-1}\\
& =
  \sum_{i} (1/2) \alpha_{i}
  \left(\vol{S^{g}}/ \vol{V} \right)
  \\
& =
  \vol{S^{g}}/ 2 \vol{V}
\\
& \geq
  \vol{S}/ 4 \vol{V}.  \quad
  \text{(by part (N.2.a) of Theorem~\ref{thm:Nibble})}
\end{align*}
\end{proof}

\subsection{\texttt{Partition}}

We now define \texttt{Partition} and analyze its performance.
First, define
\begin{equation}\label{eqn:f2phi}
f_{2} (\theta)  \defeq f_{1} (\theta  / 7) /2,
\end{equation}
and note
\[
f_{2} (\theta) \geq \Omega \left(\frac{\theta ^{2}}{\log^{3} m} \right)
\]

\vskip 0.2in
\noindent
\fbox{
\begin{minipage}{6in}
\noindent $D =\mathtt{Partition}(G, \theta , p)$,
where $G$ is a graph, $\theta, p \in (0,1)$.
\begin{enumerate}
\item [(0)] Set $W_{0} = V$, $j = 0$ and $\phi = \theta /7$.
\item [(1)] While $j < 12 m \ceiling{\lg (1/p)}$ and
$\vol{W_{j}} \geq (3/4) \vol{V}$,
\begin{enumerate}
\item [(a)] Set $j = j + 1$.
\item [(b)] Set $D_{j} = \mathtt{Random Nibble} (G [W_{j-1}], \phi )$
\item [(c)] Set $W_{j} = W_{j-1} - D_{j}$.
\end{enumerate}
\item [(2)] Set $D = D_{1} \union \dotsb \union D_{j}$.
\end{enumerate}
\end{minipage}
}
\vskip 0.2in

\begin{theorem}[\texttt{Partition}]\label{thm:Partition}
On input a graph with $m$ edges,
  the expected running time of \mbox{\rm \texttt{Partition}} is 
  $O \left(m \lg (1/p) \log^{7} m / \theta^{4} \right) $.
Let $D$ be the output of $\mathtt{Partition} (G, \theta , p)$,
  where $G$ is a graph and $\theta, p \in (0,1)$.
Then
\begin{itemize}
\item [(P.1)] $\vol{D} \leq (7/8) \vol{V}$,
\item [(P.2)] If $D \not = \emptyset$ then $\conduc{V}{D} \leq  \theta $, and
\item [(P.3)]
If $S$ is any
  set satisfying
\begin{equation}\label{eqn:P3}
\vol{S} \leq \vol{V}/2 \quad \text{and} \quad  \conduc{V}{S} \leq f_{2} (\theta),
\end{equation}
  then with probability at least $1-p$,
  $\vol{D} \geq \vol{S}/2$.
\end{itemize}
In particular, with probability at least $1-p$ either
\begin{itemize}
\item [(P.3.a)] $\vol{D} \geq (1/4) \vol{V}$, or
\item [(P.3.b)] $\vol{S \intersect D} \geq \vol{S}/2$.
\end{itemize}
\end{theorem}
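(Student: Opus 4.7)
The running time bound and items (P.1), (P.2) follow directly from Lemma~\ref{lem:randNibble}. For the running time: the main loop iterates at most $12m\lceil \lg(1/p)\rceil$ times, each invocation of \texttt{Random Nibble} on $G[W_{j-1}]$ (a graph with $O(m)$ edges) costs expected $O(\log^{7} m/\phi^{4})$ with $\phi = \theta/7$. For (P.1): each executed iteration enters with $\vol{W_{j-1}} \geq (3/4)\vol{V}$, and by (R.2) leaves $\vol{W_{j}} \geq \vol{W_{j-1}}/6 \geq \vol{V}/8$, so $\vol{D} \leq (7/8)\vol{V}$. For (P.2): every edge from $D$ to $V-D$ is an edge in some $G[W_{k-1}]$ from $D_{k}$ to $W_{k-1}-D_{k}$, and (R.1) bounds such counts by $\phi\vol{D_{k}}$, so $|E(D,V-D)| \leq \phi\vol{D}$; combining with $\vol{V-D}\geq \vol{V}/8$ yields $\conduc{V}{D}\leq 7\phi=\theta$.

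The substantive work is (P.3). Let $S_{j} = S \intersect W_{j-1}$ and $Y_{j} = \vol{S_{j}}$, so $Y_{1}=\vol{S}$ and $Y_{j+1}=Y_{j}-\vol{D_{j}\intersect S}$. The plan is to apply (R.3) to $S_{j}$ inside $G[W_{j-1}]$ at each iteration to obtain a multiplicative expected decrease $\E[Y_{j+1}\mid\mathcal{F}_{j-1}]\leq Y_{j}(1-1/(4\vol{V}))$. To apply (R.3) I must verify two conditions: $\vol{S_{j}}\leq (2/3)\vol{W_{j-1}}$, which is immediate from $\vol{W_{j-1}}\geq(3/4)\vol{V}$ and $\vol{S_{j}}\leq \vol{S}\leq \vol{V}/2$; and $\conducin{W_{j-1}}{S_{j}}\leq f_{1}(\phi)$. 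For the latter, edges from $S_{j}$ to $W_{j-1}-S_{j}$ in $G[W_{j-1}]$ form a subset of the at most $f_{2}(\theta)\vol{S}$ edges from $S$ to $V-S$ in $G$, and the previous volume inequality forces $\min(\vol{S_{j}},\vol{W_{j-1}-S_{j}})=\vol{S_{j}}$. When $Y_{j} > \vol{S}/2$ we have $\vol{S_{j}}\geq \vol{S}/2$, giving
\[
\conducin{W_{j-1}}{S_{j}}\;\leq\; \frac{f_{2}(\theta)\vol{S}}{\vol{S}/2}\;=\;2f_{2}(\theta)\;=\;f_{1}(\phi);
\]
the factor of $2$ in the definition $f_{2}(\theta)=f_{1}(\theta/7)/2$ is precisely calibrated for this step.

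With the expected decrease in hand, a stopped-supermartingale computation finishes the proof. Let $T = \min\setof{j:Y_{j}\leq \vol{S}/2}$ and let $\tau$ denote the number of iterations actually executed. A straightforward induction using the decrease inequality on the event $\setof{T>j,\,\tau\geq j}$ yields $\E[Y_{J+1}\mathbf{1}_{T>J,\,\tau\geq J}]\leq \vol{S}(1-1/(4\vol{V}))^{J}$, and Markov's inequality with $\vol{V}=2m$ and $J=12m\lceil\lg(1/p)\rceil$ then gives $\Pr[T>J+1,\,\tau=J]\leq 2e^{-J/(8m)}\leq p$. To finish the case analysis: if the loop terminates early because $\vol{W_{\tau}} < (3/4)\vol{V}$ then $\vol{D}>\vol{V}/4\geq \vol{S}/2$ and (P.3.a) holds deterministically; otherwise with probability at least $1-p$ we have $\vol{D\intersect S}\geq \vol{S}/2$, which gives both (P.3) and (P.3.b). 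The main obstacle I anticipate is the conductance verification inside $G[W_{j-1}]$: the constant in the definition of $f_{2}$ is tight, so one must be careful to maintain $\vol{S_{j}}\geq \vol{S}/2$ throughout the bad case; the bookkeeping around the two stopping times $T$ and $\tau$ is a secondary, routine technicality.
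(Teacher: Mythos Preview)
Your treatment of the running time, (P.1), and (P.2) matches the paper's. For (P.3) there is one incorrect intermediate claim: the inequality $\vol{S_j}\leq (2/3)\vol{W_{j-1}}$ does \emph{not} force $\min(\vol{S_j},\vol{W_{j-1}-S_j})=\vol{S_j}$ (take $\vol{S_j}$ close to $(2/3)\vol{W_{j-1}}$, so that $\vol{W_{j-1}-S_j}$ is close to $(1/3)\vol{W_{j-1}}$). The fix is immediate and is exactly what the paper does: on the event $\{T>j,\ \tau\geq j\}$ one has both $\vol{S_j}>\vol{S}/2$ and $\vol{W_{j-1}-S_j}\geq (3/4)\vol{V}-\vol{S}\geq \vol{V}/4\geq \vol{S}/2$, so $\min(\vol{S_j},\vol{W_{j-1}-S_j})\geq \vol{S}/2$ regardless of which side achieves the minimum, and your conductance bound $\conducin{W_{j-1}}{S_j}\leq 2f_2(\theta)=f_1(\phi)$ follows unchanged. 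With this correction the supermartingale step and the Markov bound are valid; the numerical inequality $2e^{-(3/2)\lceil\lg(1/p)\rceil}\leq p$ indeed holds for all $p\in(0,1)$.

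Your route to (P.3) is genuinely different from the paper's. You run a single multiplicative supermartingale $Y_j=\vol{S\cap W_{j-1}}$ across all $J$ iterations and finish with one application of Markov. The paper instead partitions the $J=12m\lceil\lg(1/p)\rceil$ iterations into $\lceil\lg(1/p)\rceil$ phases of length $12m$; within a phase it works additively, setting $X_i=\vol{C_i\cap S}/\vol{U_0\cap S}$ and a surrogate that equals $X_i$ until either stopping condition triggers and $1/(8m)$ thereafter, then uses $\E[\sum_i Y_i]\geq 3/2$ together with the deterministic bound $\sum_i Y_i\leq 5/2$ to conclude each phase succeeds with probability at least $1/2$. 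Your argument is more direct and avoids the phase decomposition; the paper's argument sidesteps the stopping-time bookkeeping by working with a bounded additive sum. Both reach the same conclusion with the same constants.
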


Property (P.3) is a little unusual and deserves some explanation.
It says that for every set $S$ of low conductance, with high probability
  either
  $D$ is a large fraction of $S$, or it is a large fraction of the
  entire graph.
While we would like to pick just one of these properties and guarantee that
  it holds with high probability, this would be unreasonable:  
  on one hand, there might be no big set $D$ of small conductance;
  and, on the other hand, even if $S$ is small the algorithm might
  cut out a large set $D$ that completely avoids $S$.

\begin{proof}[Proof of Theorem~\ref{thm:Partition}]
The bound on the expected running time of \texttt{Partition}
  is immediate from the bound on the running time of \texttt{RandomNibble}.

Let $j_{out}$ be the iteration at which \texttt{Partition} stops,
  so that $D = D_{1} \union \dotsb \union D_{j_{out}}$.
To prove (P.1), note that
  $\vol{W_{j_{out}-1}} \geq (3/4) \vol{V}$ and so
  $\vol{D_{1} \union \dotsb \union D_{j_{out}-1}} \leq (1/4) \vol{V}$.
By part (R.2) of Lemma~\ref{lem:randNibble},
  $\vol{D_{j_{out}}} \leq  (5/6) \vol{W_{j_{out}-1}}$.
So,
\[
\vol{D_{1} \union \dotsb \union D_{j_{out}}}
\leq 
\vol{V} - \vol{W_{j_{out}-1}} + \frac{5}{6} \vol{W_{j_{out}-1}} 
= 
\vol{V} - \frac{1}{6} \vol{W_{j_{out}-1}} 
\leq
 \frac{7}{8} \vol{V}.
\]

To establish (P.2),
  we first compute
\begin{align*}
\sizeof{E (D, V-D)}
& =
\sum_{i=1}^{j_{out}} \sizeof{E (D_{i}, V-D)}\\
& \leq
\sum_{i=1}^{j_{out}}
  \sizeof{E (D_{i}, W_{i-1} - D_{i})}\\
& \leq
\sum_{i=1}^{j_{out}}
  \phi \vol{D_{i}} \quad
\text{(by (R.1))}
\\
& = \phi  \vol{D}.
\end{align*}
So, if  $\vol{D} \leq \vol{V}/2$,
  then $ \conduc{V}{D} \leq \phi$.
On the other hand, we established above that
  $\vol{D} \leq (7/8) \vol{V}$, from which
  it follows that
\[
  \vol{V-D} \geq (1/8) \vol{V} \geq (8/7) (1/8) \vol{D} = (1/7) \vol{D}.
\]
So,
\[
  \conduc{V}{D}
=
 \frac{\sizeof{E (D, V-D)}}
      {\min \left(\vol{D}, \vol{V-D} \right)}
\leq
7  \frac{\sizeof{E (D, V-D)}}{\vol{D}}
\leq
7 \phi
=
\theta
\]

Let $j_{max} = 12 m \ceiling{\lg (1/p)}$.
To prove part (P.3), let $S$ satisfy \eqref{eqn:P3} and
  consider what happens if we ignore the second condition in the
  while loop and run
  \texttt{Partition} for all
  potential $j_{max}$ iterations, obtaining
  cuts $D_{1}, \dotsc , D_{12 m \ceiling{\lg (1/p)}}$.
Let 
\[
D^{\leq j} = \union_{i \leq j} D_{i}.
\]
We will prove that if neither 
\[
\vol{D^{\leq k}} \geq \frac{\vol{V}}{4}
\quad
\text{nor}
\quad
\vol{S \intersect D^{\leq k}} \geq \frac{\vol{S}}{2}
\]
hold at iteration $k$, then with probability at least $1/2$, one of these
  conditions will be satisfied by iteration $k + 12 m$.
Thus, after all $j_{max}$ iterations, one of conditions
  $(P.3.a)$ or $(P.3.b)$ will be satisfied with probability
  at least $1-p$.
If the algorithm runs for fewer iterations, then condition $(P.3.a)$
  is satisfied.

To simplify notation, let $C_{i} = D_{k+i}$ and $U_{i} = W_{k+i}$,
  for $0 \leq i \leq  12 m$.
Assume that
\[
\vol{U_{0}} \geq \frac{3}{4} \vol{V}
\quad \text{and} \quad 
\vol{S \intersect U_{0}} < \frac{1}{2} \vol{S}.
\]
For $1 \leq i \leq 12 m$, define the random variable
\[
  X_{i} = \frac{\vol{C_{i} \intersect S}}{\vol{U_{0} \intersect S}}.
\]
As each set $C_{i}$ is a subset of $U_{0}$, and the $C_{i}$ are mutually
  disjoint, we will always have
\[
  \sum_{i = 1}^{12m} X_{i} \leq 1.
\]
Define $\beta$ to satisfy
\[
  (1-\beta) \vol{S \intersect U_{0}} = \frac{1}{2} \vol{S},
\]
and note that this ensures $0 < \beta \leq 1/2$.
Moreover, if $\sum X_{i} \geq \beta$, then 
  $\vol{S \intersect D^{\leq k+12m}} \geq \vol{S}/2$
  will hold.

Let $E_{j}$  be the event
\[
\vol{U_{j}} < \frac{3}{4} \vol{V}.
\]
We need to show that, with probability at least $1/2$, either an
  event $E_{j}$ holds, or $\sum X_{i} \geq \beta$.
To this end, we now show that if neither $E_{j}$
  nor $\sum_{i \leq j} X_{i} \geq \beta$ holds,
  then $\expec{}{X_{j+1}} \geq 1/8 m$.
If $\sum_{i \leq j} X_{i} < \beta$, then
\[
\vol{S \intersect U_{j}}
= 
\vol{S \intersect U_{0}}
-
\sum_{i \leq j} \vol{S \intersect C_{i}}
=
\vol{S \intersect U_{0}}
\left(1 - \sum_{i \leq j} X_{i} \right)
>
\vol{S \intersect U_{0}} (1 - \beta )
= 
\frac{1}{2} \vol{S}.
\]
If $E_{j}$ does not hold, then
\[
  \vol{U_{j} - S \intersect U_{j}}
= 
  \vol{U_{j}} - \vol{S \intersect U_{j}}
\geq 
\frac{3}{4} \vol{V} - \vol{S}
\geq 
\frac{1}{4} \vol{V}
\geq 
\frac{1}{2} \vol{S}.
\]
So,
\[
\conducin{U_{j}}{S \intersect U_{j}}
= 
\frac{
  \sizeof{E (S \intersect U_{j}, U_{j} - S \intersect U_{j})}
}{
 \min \left(\vol{S \intersect U_{j}}, \vol{U_{j} - S \intersect U_{j}} \right)
}
\leq 
\frac{
  \sizeof{E (S , V - S)}
}{
  (1/2) \vol{S}
}
\leq 
2 \conduc{G}{S}
\leq 
2 f_{2} (\theta)
=
f_{1} (\phi).
\]
We also have $\vol{S \intersect U_{j}} \leq \vol{S} \leq (2/3) \vol{U_{j}}$,
 so the conditions of 
 part $(R.3)$ of Lemma~\ref{lem:randNibble} are satisfied and
\[
  \expec{}{X_{j+1}} \geq 1 / 4 \vol{U_{j}} \geq 1 / 8 m.
\]

Now, set 
\[
  Y_{j} = 
\begin{cases}
1 / 8m & \text{if $\sum_{i < j} X_{i} \geq \beta$, or if $E_{j-1}$}          
\\
X_{j} & \text{otherwise}.
\end{cases}
\]
So, for all $j$ we have $\expec{}{Y_{j}} \geq 1 / 8m$, 
  and so $\expec{}{\sum_{j \leq 12 m} Y_{j}} \geq 3/2$.
On the other hand,
  
\[
\sum_{i \leq 12 m} Y_{j} \leq \sum X_{i} + \frac{12 m}{8 m} \leq 5/2.
\]
So, with probability at least $1/2$, 
\[
  \sum_{j \leq 12 m} Y_{j} \geq 1/2 \geq \beta .
\]
This implies that with probability at least $1/2$ either
  $\sum_{i} X_{i} \geq \beta$ or some event $E_{j}$ holds,
  which is what we needed to show.
\end{proof}

\bibliographystyle{alpha}
\bibliography{precon}

\end{document}